\def\N{\mathbb{N}}
\def\R{\mathbb{R}}
\def\ta{\mathtt{a}}
\def\tb{\mathtt{b}}
\def\tc{\mathtt{c}}
\def\tn{\mathtt{n}}
\def\tx{\mathtt{x}}
\def\ty{\mathtt{y}}
\def\nth#1{#1$^{\text{th}}$}
\title{Reconstructing Words from Right-Bounded-Block Words}
\author{Pamela Fleischmann\inst{1} \and Marie Lejeune\inst{2}\thanks{Supported by a FNRS fellowship.} \and Florin Manea\inst{3}\thanks{Supported by the DFG grant MA 5725/2-1.} \and Dirk 
Nowotka\inst{1} \and Michel Rigo\inst{2}}
\authorrunning{P. Fleischmann \and M. Lejeune \and F. Manea \and D. Nowotka \and M. Rigo}
\institute{Kiel University, 
Germany \email{\{fpa,dn\}@informatik.uni-kiel.de} \and 
University of Liège, Belgium \email{\{m.lejeune,m.rigo\}@uliege.be} \and University of Göttingen, Germany \email{florin.manea@informatik.uni-goettingen.de}}
\begin{document}

\maketitle
\begin{abstract}
  A reconstruction problem of words from scattered factors asks for the minimal information, like multisets of scattered factors of a given length or the number of occurrences of scattered factors from a given set, necessary to uniquely determine a word. We show that a word $w\in\{\ta,\tb\}^*$ can be reconstructed from the number of occurrences of at most $\min(|w|_\ta,|w|_\tb)+1$ scattered factors of the form $\ta^i \tb$, where $|w|_{\ta}$ is the number of occurrences of the letter $\ta$ in $w$. Moreover, we generalize the result to alphabets of the form $\{1, \ldots, q\}$ by showing that at most $\sum_{i=1}^{q-1} |w|_i \, (q-i+1)$ scattered factors suffices to reconstruct $w$. Both results improve on the upper bounds known so far. Complexity time bounds on reconstruction algorithms are also considered here.

\end{abstract}

\section{Introduction}
%%%%%%%%%%%%%%%%%%%%%%%%%%%%%%%%%%%%%%%%%%%%%%%%%%%%%%%%%%%%%%%%%%%%%%
The general scheme for a so-called {\em reconstruction problem} is the following one: 
given a sufficient amount of information about substructures of a hidden discrete structure, 
can one uniquely determine this structure? In particular, what are the fragments about the structure needed to recover it all. For instance, a square matrix of size at least~$5$ can be reconstructed from its principal minors given in any order \cite{Manvel}.

In graph theory, given some subgraphs of a graph (these subgraphs may share some common vertices and edges), can one uniquely rebuild the original graph? 
Given a finite undirected graph $G=(V,E)$ with $n$ vertices, consider the multiset made of the $n$ induced subgraphs of $G$ obtained by deleting exactly one vertex from $G$. In particular, one knows how many isomorphic subgraphs of a given class appear. Two graphs leading to the same multiset (generally called a {\em deck}) are said to be {\em hypomorphic}. A conjecture due to Kelly and Ulam states that two hypomorphic graphs with at least three vertices are isomorphic \cite{Kelly,Ulam}. A similar conjecture in terms of edge-deleted subgraphs has been proposed by Harary \cite{Har}. These conjectures are known to hold true for several families of graphs.

A {\em finite word}, i.e., a finite sequence of letters of some given alphabet, can be seen as an edge- or vertex-labeled linear tree. So variants of the graph reconstruction problem can be considered and are of independent interest.  
Participants of the Oberwolfach meeting on Combinatorics on Words in 2010 \cite{Oberwolfach} gave a list 
 of 18 important open problems in the field. 
Amongst them, the twelfth problem is stated as {\em reconstruction from subwords of given length}. 
In the following statement and all along the paper, a {\em subword} of a word is understood as a subsequence of not necessarily contiguous letters from this word, i.e., subwords can be obtained by deleting letters from the given word. To highlight this latter property, they are often called {\em scattered subwords} or {\em scattered factors}, which is the notion we are going to use.

\begin{definition}
Let $k,n$ be natural numbers. Words of length $n$ over a given alphabet are said to be {\em $k$-reconstructible} whenever the multiset of scattered factors of length~$k$ (or {\em $k$-deck}) uniquely determines any word of length~$n$.
\end{definition} 

Notice that the definition requires multisets to store the information how often a scattered factor occurs in the words. For instance, the scattered factor $\tb\ta$ occurs three times in $\tb\ta\tb\ta$ which provides more information for the reconstruction than the mere fact that $\tb\ta$ is a scattered factor.

The challenge is to determine the function $f(n)=k$ where $k$ is the least integer for which words of length~$n$ are $k$-reconstructible.  
This problem has been studied by several authors and one of the first trace goes back to 1973 \cite{Kala}. 
Results in that direction have been obtained by M.-P.~Sch\"utzenberger (with the so-called {\em Sch\"utzenberger's Guessing game}) and L.~Simon \cite{Simon}.
They show that words of length~$n$ sharing the same multiset of scattered factors of length up to $\lfloor n/2 \rfloor +1$ are the same. Consequently, words of length~$n$ are $(\lfloor n/2 \rfloor +1)$-reconstructible. In \cite{KrRo}, this upper bound has been improved: Krasikov and Roditty have shown that words of length~$n$ are $k$-reconstructible for $k\ge\lfloor 16\sqrt{n} /7\rfloor+5$. 
On the other hand Dudik and Schulmann \cite{Dudik} provide a lower bound: if words of length~$n$ are $k$-reconstructible, then $k\ge 3^{(\sqrt{2/3}-o(1))\log_3^{1/2}n}$. Bounds were also considered in \cite{Stock}. Algorithmic complexity of the reconstruction problem is discussed, for instance, in \cite{DErdos}. Note that the different types of reconstruction problems have application in philogenetic networks, see, e.g., \cite{Philo}, or in the context of molecular genetics \cite{Molec} and coding theory \cite{coding}.

Another motivation, close to combinatorics on words, stems from the study of $k$-binomial equivalence of finite words and $k$-binomial complexity of infinite words (see \cite{RiSa15} for more details). 
Given two words of the same length, they are $k$-binomially equivalent if they have the same multiset of scattered factors of length $k$, also known as {\em $k$-spectrum} (\cite{BerKar03}, \cite{conf/dlt/Manuch99}, \cite{RozSal97}). Given two words $x$ and $y$ of the same length, one can address the following problem: decide whether or not $x$ and $y$ are $k$-binomially equivalent? A polynomial time decision algorithm based on automata and a probabilistic algorithm have been addressed in \cite{decision}. A variation of our work would be to find, given $k$ and $n$, a minimal set of scattered factors for which the knowledge of the number of occurrences in $x$ and $y$ permits to decide $k$-binomial equivalence. 

Over an alphabet of size $q$, there are $q^k$ pairwise distinct length-$k$ factors. 
If we relax the requirement of only considering scattered factors of the same length, 
another interesting question is to look for a minimal (in terms of cardinality) multiset of scattered factors to reconstruct entirely a word. 
Let the {\em binomial coefficient} $\binom{u}{x}$ be the number of occurrences of $x$ as a scattered factor of $u$. The general problem addressed in this paper is therefore the following one.

\begin{problem}\label{reconstproblem}
Let $\Sigma$ be a given alphabet and $n$ a natural number. We want to reconstruct a hidden word $w \in \Sigma^n$. To that aim, we are allowed to pick a word $u_i$ and ask questions of the type ``What is the value of $\binom{w}{u_i}$?". Based on the answers to questions related to $\binom{w}{u_1}, \ldots, \binom{w}{u_i}$, we can decide which will be the next question (i.e. decide which word will be $u_{i+1}$). 
We want to have the shortest sequence $(u_1, \ldots, u_k)$ uniquely determining $w$ by knowing the values of $\binom{w}{u_1}, \ldots, \binom{w}{u_k}$. 
%For a given alphabet $\Sigma$ and a natural number $n$, find a minimal number of $k$ words $u_1,\ldots,u_k$ (not necessarily of the same length) such that no two words of length~$n$ over $\Sigma$ have the same length-$k$ vector of coefficients $\left[ \binom{w}{u_1},\ldots,\binom{w}{u_k} \right]$
%and thus, uniquely determine $w$ by knowing the binomial coefficients of these $k$ given words.
\end{problem}
We naturally look for a value of $k$ less than the upper bound for $k$-reconstructibility. 

In this paper, we firstly recall the use of Lyndon words in the context of reconstructibility. A word $w$ over a totally ordered alphabet is called {\em Lyndon word} if it is the lexicographically smallest amongst all its rotations, i.e., $w=xy$ is smaller than $yx$ for all non trivial factorisations $w=xy$. Every binomial coefficient $\binom{w}{x}$ for arbitrary words $w$ and $x$ over the same alphabet can be deduced from the values of the coefficients $\binom{w}{u}$ for Lyndon words $u$ that are lexicographically less than or equal to $x$. This result is presented in Section~\ref{prel} along with the basic definitions. We consider an  alphabet  equipped with a total order on the letters. Words of the form $\ta^n \tb$ with letters $\ta<\tb$ and a natural number $n$ are a special form of Lyndon words, the so-called {\em right-bounded-block} words.

 We consider the reconstruction problem from the information given by the occurrences of right-bounded-block words as scattered factors of a word of length~$n$. In Section~\ref{binary} we show how to reconstruct a word uniquely from $m+1$ binomial coefficients of right-bounded-block words where $m$ is the minimum number of occurrences of $\ta$ and $\tb$ in the word. We also prove that this is less than the upper bound given in \cite{KrRo}. In Section~\ref{general} we reduce the problem for arbitrary finite alphabets $\{1,\dots,q\}$ to the binary case. Here we show that at most $\sum_{i=1}^{q-1} |w|_i \, (q-i+1) \le q|w|$ binomial coefficients suffice to uniquely reconstruct $w$ with $|w|_i$ being the number of occurrences of letter $i$ in $w$. Again, we compare this bound to the best known one for the classical reconstruction problem (from words of a given length). In the last section of the paper we also propose several results of algorithmic nature regarding the efficient reconstruction of words from given scattered factors.

\section{Preliminaries}\label{prel}
%%%%%%%%%%%%%%%%%%%%%%%%%%%%%%%%%%%%%%%%%%%%%%%%%%%%%%%%%%%%%%%%%%%%%%
Let $\N$ be the set of natural numbers, $\N_0 = \N\cup\{0\}$, and let 
$\N_{\geq k}$ be the set of all natural numbers greater than or equal to $k$. 
Let $[n]$ denote the set $\{1,\ldots, n\}$ and $[n]_0 = [n]\cup\{0\}$
for an $n\in\N$.

An {\em alphabet} $\Sigma=\{\ta,\tb,\tc,\dots\}$ is a finite set of letters and 
a {\em word} is a finite sequence of letters. 
We let $\Sigma^*$ denote the set of all finite words over $\Sigma$. The {\em empty 
word} is denoted by $\varepsilon$ and 
$\Sigma^+$ is the free semigroup $\Sigma^*\backslash\{\varepsilon\}$.
The length of a word $w$ is denoted by $|w|$. 
Let 
$\Sigma^{\leq k}:=\{w\in\Sigma^{\ast}|\,|w|\leq k\}$ and $\Sigma^k$ be the 
set of all words of length exactly $k\in\N$. The number of occurrences of a 
letter $\ta\in\Sigma$ in a word $w\in\Sigma^{\ast}$ is denoted by $|w|_\ta$.
The \nth{$i$} letter of a word $w$ is given by $w[i]$ for $i\in[|w|]$.  The powers of 
$w\in\Sigma^{\ast}$ are 
defined recursively by $w^0=\varepsilon$, $w^n=ww^{n-1}$ for $n\in\N$.
A word $u\in\Sigma^{\ast}$ is a \emph{factor} of $w\in\Sigma^{\ast}$, if 
$w=xuy$ holds for some words $x,y\in\Sigma^{\ast}$. Moreover, $u$ is a 
\emph{prefix} of $w$ if $x=\varepsilon$ holds and a \emph{suffix} if 
$y=\varepsilon$ holds. The factor of $w$ from the \nth{$i$} to the \nth{$j$} 
letter will be denoted by $w[i..j]$ for $1\leq i\leq j\leq |w|$. Two words $u,v\in\Sigma^{\ast}$ are called {\em conjugates or rotations} of each other if there exist $x,y\in\Sigma^{\ast}$ with $u=xy$ and $v=yx$. Additional basic information about combinatorics on words can be found in \cite{lothaire}.

\begin{definition}
Let $<$ be a total ordering on $\Sigma$.
A word $w\in\Sigma^{\ast}$ is called {\em right-bounded-block word} if there exist 
$\tx,\ty\in\Sigma$ with $\tx<\ty$ and $\ell\in\N_0$ with $w=\tx^{\ell}\ty$.
\end{definition}

\begin{definition}
A word $u=\mathtt{a_1} \cdots \mathtt{a_n}\in\Sigma^n$, for $n\in\N$, is a {\em scattered factor} 
of 
a word 
$w\in\Sigma^+$ if there exist $v_0,\ldots,v_n\in\Sigma^{\ast}$ with 
$w=v_0\mathtt{a_1}v_1 \cdots 
v_{n-1}\mathtt{a_n}v_n$.
For words $w,u\in\Sigma^{\ast}$, define $\binom{w}{u}$ as the number of 
occurrences of $u$ as a scattered factor of $w$.
\end{definition}

\begin{remark}
Notice that $|w|_{\tx}=\binom{w}{\tx}$ for all $\tx\in\Sigma$.
\end{remark}

The following definition addresses Problem~\ref{reconstproblem}. 

\begin{definition}
A word $w\in\Sigma^n$ is called {\em uniquely reconstructible/determined} by 
the set $S \subset \Sigma^{\ast}$ if for all words 
$v\in\Sigma^n\backslash\{w\}$ there exists a word $u\in S$ with 
$\binom{w}{u}\neq\binom{v}{u}$.
\end{definition}

Consider $S=\{\ta\tb,\tb\ta\}$. Then $w=\ta\tb\tb\ta$ is not uniquely reconstructible by $S$ since $\left[\binom{w}{\ta\tb},\binom{w}{\tb\ta}\right]=[2,2]$ is also the $2$-vector of binomial coefficients of $\tb\ta\ta\tb$. On the other hand $S=\{\ta,\ta\tb,\ta\tb^2\}$ reconstructs $w$ uniquely. The following remark gives immediate results for binary alphabets.

\begin{remark}
Let $\Sigma=\{\ta,\tb\}$ and $w\in\Sigma^{n}$. If $|w|_\ta\in\{0,n\}$ then $w$ contains either only $\tb$ or $\ta$ and by the 
given length $n$ of $w$, $w$ is uniquely determined by $S=\{\ta\}$.
This fact is in particular an equivalence: $w\in\Sigma^{n}$ can be uniquely 
determined by $\{\ta\}$ iff $|w|_\ta\in\{0,n\}$. If $|w|_\ta\in\{1,n-1\}$, $w$ is not uniquely determined by $\{\ta\}$ as 
witnessed by $\ta\tb$ and $\tb\ta$ for $n=2$. It is immediately clear that the 
additional information $\binom{w}{\ta\tb}$ leads to unique determinism of $w$.
\end{remark}

Lyndon words play an important role regarding the reconstruction problem. As shown in \cite{Reut} only scattered factors which are Lyndon words are necessary to determine a word uniquely, i.e., $S$ can always be assumed to be a set of Lyndon words.

\begin{definition}
Let $<$ be a total ordering on $\Sigma$. A word $w\in\Sigma^{\ast}$  is a {\em Lyndon word} iff for all $u,v \in \Sigma^{+}$ with $w = uv$, we have $w <_{lex} vu$ where $<_{lex}$ is the lexicographical ordering on words induced by $<$.
\end{definition}

\begin{proposition}[\cite{Reut}]
Let $w$ and $u$ be two words. The binomial coefficient $\binom{w}{u}$ can be computed using only binomial coefficients of the type $\binom{w}{v}$ where $v$ is a Lyndon word of length up to $|u|$ such that $v \leq_{lex} u$.
\end{proposition}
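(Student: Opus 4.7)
The plan is to pass through the shuffle algebra. Let $\shuffle$ denote the shuffle product on $\mathbb{Q}\langle\Sigma\rangle$, defined recursively by $\varepsilon\shuffle v = v\shuffle \varepsilon = v$ and $(\tx u)\shuffle(\ty v) = \tx\,(u\shuffle \ty v)+\ty\,(\tx u\shuffle v)$, extended bilinearly. My first step is to establish the classical identity $\binom{w}{u\shuffle v}=\binom{w}{u}\cdot\binom{w}{v}$ (linearly extended in the right argument) by a direct counting argument: a pair of disjoint scattered occurrences of $u$ and of $v$ in $w$ is the same data as a single scattered occurrence of a term appearing in $u\shuffle v$.

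The second ingredient is the Chen--Fox--Lyndon theorem together with Radford's structure theorem for the shuffle algebra. Every word $u$ admits a unique Lyndon factorization $u=\ell_1^{i_1}\cdots\ell_k^{i_k}$ with Lyndon words satisfying $\ell_1>_{lex}\cdots>_{lex}\ell_k$ and multiplicities $i_j\ge 1$; Radford's theorem then yields the triangularity identity
\[
\frac{1}{i_1!\cdots i_k!}\,\ell_1^{\shuffle i_1}\shuffle\cdots\shuffle\ell_k^{\shuffle i_k} \;=\; u \;+\; \sum_{\substack{v<_{lex}u\\|v|=|u|}} c_v\,v
\]
in $\mathbb{Q}\langle\Sigma\rangle$. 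Two easy observations are crucial: $\ell_1$ is a prefix of $u$, so $\ell_1\leq_{lex}u$, and hence $\ell_j\leq_{lex}\ell_1\leq_{lex}u$ with $|\ell_j|\leq|u|$ for every $j$; and the shuffle product preserves total length, so each remainder term $v$ has $|v|=|u|$.

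Equipped with these two tools, I conclude by induction on $u$ in lexicographic order among words of length at most $|u|$. The base case is $u$ Lyndon, where $\binom{w}{u}$ already has the required form. For the inductive step, I solve the triangularity identity for $u$ and apply $\binom{w}{\cdot}$ together with the shuffle identity, obtaining
\[
\binom{w}{u} \;=\; \frac{1}{i_1!\cdots i_k!}\prod_{j=1}^{k}\binom{w}{\ell_j}^{i_j} \;-\; \sum_{\substack{v<_{lex}u\\|v|=|u|}} c_v\binom{w}{v}.
\]
By the induction hypothesis, each $\binom{w}{v}$ in the remainder is already expressible through $\binom{w}{\ell}$ for Lyndon $\ell$ with $\ell\leq_{lex}v<_{lex}u$ and $|\ell|\leq|v|\leq|u|$; combined with the Lyndon factors $\ell_1,\ldots,\ell_k$ appearing in the leading term, this proves the claim.

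The main obstacle is Radford's triangularity identity itself, which I would otherwise import from the literature. Proving it from scratch requires a careful combinatorial argument showing that, among all words produced by the iterated shuffle of the Lyndon components, $u$ arises with multiplicity exactly $i_1!\cdots i_k!$ while every other resulting term is a strictly lex-smaller rearrangement of the same length. This hinges on the fact that the Lyndon factorization orders its factors in the unique way that maximizes the concatenation in lex order; it is the technical heart of the statement and is treated in detail in Reutenauer's monograph on free Lie algebras.
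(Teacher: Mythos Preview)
Your first step contains a genuine error: the identity $\binom{w}{u\shuffle v}=\binom{w}{u}\binom{w}{v}$ is false. The product $\binom{w}{u}\binom{w}{v}$ counts \emph{all} pairs of occurrences of $u$ and of $v$ as scattered factors of $w$, not only the disjoint ones; the shuffle, by contrast, accounts only for interleavings into words of length $|u|+|v|$. A minimal counterexample is $w=u=v=\ta$: then $\binom{w}{u}\binom{w}{v}=1$, whereas $u\shuffle v=2\,\ta\ta$ and $\binom{\ta}{\ta\ta}=0$. The correct multiplicative identity uses the \emph{infiltration} product $\downarrow$ (which allows shared positions), namely $\binom{w}{u}\binom{w}{v}=\binom{w}{u\downarrow v}$; this is precisely the identity the paper relies on.

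Consequently the displayed formula you derive from Radford's triangularity is not valid as written. The paper's argument differs in two respects that together repair the gap. First, it uses the infiltration identity rather than the shuffle identity. Second, instead of the full Lyndon factorisation it invokes \cite[Corollary~6.2]{Reut}: any non-Lyndon $u$ admits a splitting $u=xy$ with $x,y\neq\varepsilon$ such that every word occurring in $x\shuffle y$ is $\le_{lex} u$. Since the top-degree part of $x\downarrow y$ coincides with $x\shuffle y$, one isolates the coefficient of $u$ and obtains
\[
\binom{w}{u}=\frac{1}{(x\shuffle y,u)}\Bigl[\binom{w}{x}\binom{w}{y}-\sum_{v\neq u}(x\downarrow y,v)\binom{w}{v}\Bigr],
\]
and then recurses on the shorter words $x,y$ and on the lexicographically smaller or shorter words $v$. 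Your Radford-based strategy can be salvaged along the same lines---replace $\shuffle$ by $\downarrow$ and carry the induction on the pair $(|u|,u)$ ordered by length first and lex second, so that the extra lower-degree terms produced by infiltration are absorbed---but as it stands the key identity fails and the argument does not go through.
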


To obtain a formula to compute the binomial coefficient $\binom{w}{u}$ for $w,u\in\Sigma^{\ast}$ by binomial coefficients $\binom{w}{v_i}$ for Lyndon words $v_1,\dots,v_k$ with $v_i\in\Sigma^{\leq|u|}$, $i\in[k]$, and $k\in\N$ the definitions of shuffle and infiltration are necessary \cite{lothaire}.

\begin{definition}\label{shuffle}
Let $n_1,n_2\in\N$, $u_1 \in \Sigma^{n_1}$, and $u_2 \in \Sigma^{n_2}$. Set $n=n_1+n_2$. The {\em shuffle} of $u_1$ and $u_2$ is the polynomial $u_1 \shuffle u_2 = \sum_{I_1,I_2} w(I_1,I_2)$
where the sum has to be taken over all pairs $(I_1,I_2)$ of sets that are partitions of $[n]$ such that $|I_1| = n_1$ and $|I_2| = n_2$.
If $I_1 = \{ i_{1,1} < \ldots < i_{1,n_1}\}$ and $I_2 = \{ i_{2,1} < \ldots < i_{2,n_2}\}$, then the word $w(I_1,I_2)$ is defined such that $w[i_{1,1}] w[i_{1,2}] \cdots w[i_{1,n_1}] = u_1$ and $w[i_{2,1}] w[i_{2,2}] \cdots w[i_{2,n_2}] = u_2$ hold.
\end{definition}

The infiltration is a variant of the shuffle in which equal letters can be merged.

\begin{definition}\label{infiltration}
Let $n_1,n_2\in\N$, $u_1 \in \Sigma^{n_1}$, and $u_2 \in \Sigma^{n_2}$. Set $n=n_1+n_2$. The {\em infiltration} of $u_1$ and $u_2$ is the polynomial $u_1 \downarrow u_2 = \sum_{I_1,I_2} w(I_1,I_2)$,
where the sum has to be taken over all pairs $(I_1,I_2)$ of sets of cardinality $n_1$ and $n_2$ respectively, for which the union is equal to the set $[n']$ for some $n' \leq n$. 
Words $w(I_1,I_2)$ are defined as in the previous definition. Note that some $w(I_1,I_2)$ are not well defined if $i_{1,j} = i_{2,k}$ but $u_1[j] \neq u_2[k]$. In that case they do not appear in the previous sum.
\end{definition}

Considering for instance $u_1=\ta\tb\ta$ and $u_2=\ta\tb$ gives the polynomials
\begin{align*}
u_1 \shuffle u_2 & = 2\ta\tb\ta\tb\ta+4\ta\ta\tb\tb\ta+2\ta\ta\tb\ta\tb+2\ta\tb\ta\ta\tb,\\
u_1 \downarrow u_2 &= \ta\tb\ta \shuffle \ta\tb + \ta\tb\ta + 2 \ta\tb\tb\ta + 2 \ta\ta\tb\ta + 2 \ta\tb\ta\tb.
\end{align*}

Based on Definitions~\ref{shuffle} and \ref{infiltration}, we are able to give a formula to compute a binomial coefficient from the ones making use of Lyndon words. This formula is given implicitely in \cite[Theorem~6.4]{Reut}: Let $u\in\Sigma^{\ast}$ be a non-Lyndon word. By \cite[Corollary~6.2]{Reut} there exist non-empty words $x,y\in\Sigma^{\ast}$ and with $u = xy$ and such that every word appearing in the polynomial $x \shuffle y$ is lexicographically less than or equal to $u$.
Then, for all word $w \in \Sigma^*$, we have
$$
 \binom{w}{u} = \frac{1}{(x \shuffle y, u)} \left[ \binom{w}{x} \binom{w}{y} - \sum_{v \in \Sigma^{\ast}, v\neq u} (x \downarrow y, v) \binom{w}{v}\right],
$$
where $(P, v)$ is a notation giving the coefficient of the word $v$ in the polynomial $P$. One may apply recursively this formula until only Lyndon factors are considered.

\begin{example}
Considering $\Sigma=\{\ta,\tb\}$ the binomial coefficient $\binom{w}{\tb \ta}$ can be computed using the Lyndon words $\ta$, and $\tb$ by 
\begin{align*}
\binom{w}{\tb\ta} &= \frac{1}{(\tb \shuffle \ta, \tb\ta)} \left[ \binom{w}{\tb} \binom{w}{\ta} - (\tb \downarrow \ta, \ta\tb) \binom{w}{\ta\tb} \right] 
= \binom{w}{\tb} \binom{w}{\ta} - \binom{w}{\ta\tb}.
\end{align*}
Regarding word length three, the Lyndon words are $\ta\ta\tb$ and $\ta\tb\tb$. Let us give formulas to compute $\binom{w}{\ta\tb\ta}$, $\binom{w}{\tb\ta\ta}$, $\binom{w}{\tb\ta\tb}$ and $\binom{w}{\tb\tb\ta}$.
Having $x=\ta\tb$ and $y=\ta$, we obtain
$$
\binom{w}{\ta\tb\ta} = \binom{w}{\ta\tb} \left[ \binom{w}{\ta} - 1 \right] - 2 \binom{w}{\ta\ta\tb}.
$$
\smallskip
For $u = \tb\ta\ta$, we can either choose $x = \tb$ and $y = \ta\ta$ or $x = \tb\ta$ and $y = \ta$. In the first case, we get
$$
\binom{w}{\tb\ta\ta} = \binom{w}{\tb}\binom{w}{\ta\ta} - \binom{w}{\ta\tb\ta} - \binom{w}{\ta\ta\tb}
$$
and by reinjecting formulas for $\binom{w}{\ta\ta}$ and $\binom{w}{\ta\tb\ta}$, obtained recursively,
$$
\binom{w}{\tb\ta\ta} = \left[ \binom{w}{\ta} - 1 \right] \left[ \frac{1}{2} \binom{w}{\ta} \binom{w}{\tb} - \binom{w}{\ta\tb} \right] + \binom{w}{\ta\ta\tb}.
$$
\smallskip 
Finally, the last two formulas are quite similar to what we already had:
$$
\binom{w}{\tb\ta\tb} = \binom{w}{\ta\tb} \left[ \binom{w}{\tb} - 1 \right] - 2 \binom{w}{\ta\tb\tb}
$$
and
$$
\binom{w}{\tb\tb\ta} = \left[ \binom{w}{\tb} - 1 \right] \left[ \frac{1}{2} \binom{w}{\ta} \binom{w}{\tb} - \binom{w}{\ta\tb} \right] + \binom{w}{\ta\tb\tb}.
$$
\end{example}

\section{Reconstruction from Binary Right-Bounded-Block Words }\label{binary}
%%%%%%%%%%%%%%%%%%%%%%%%%%%%%%%%%%%%%%%%%%%%%%%%%%%%%%%%%%%%%%%%%%%%%%
In this section we present a method to reconstruct a binary word uniquely from binomial coefficients of right-bounded-block words. Let $n\in\N$ be a natural number and $w\in\{\ta,\tb\}^n$ a word. Since the word length $n$ is assumed to be known, $|w|_\ta$ is known if $|w|_\tb$ is given and vice versa. Set for abbreviation $k_u=\binom{w}{u}$ for $u\in \Sigma^{\ast}$. Moreover we assume w.l.o.g. $k_\ta\leq k_\tb$ and that $k_\ta$ is known (otherwise substitute each $\ta$ by $\tb$ and each $\tb$ by $\ta$, apply the following reconstruction method and revert the substitution). This implies that $w$ is of the form
\begin{align}\label{basisword}
\tb^{s_1}\ta\tb^{s_2}\dots \tb^{s_{k_\ta}}\ta\tb^{s_{k_\ta+1}}
\end{align}
for $s_i\in\N_0$ and $i\in[|w|_{\ta}+1]$ with $\sum_{i\in[k_\ta +1]}s_i=n-k_\ta=k_\tb$ and thus we get
for $\ell\in[k_\ta]_0$ 
\begin{equation}\label{alb}
k_{\ta^{\ell}\tb}=\binom{w}{\ta^{\ell}\tb}=\sum_{i=\ell+1}^{k_\ta+1}\binom{i-1}{\ell}s_i.
\end{equation}

\begin{remark}\label{ci}
Notice that for fixed 
$\ell\in[k_\ta]_0$ and $c_i=\binom{i-1}{\ell}$ for 
$i\in[k_\ta+1]\backslash[\ell]$, we have $c_i<c_{i+1}$
and especially $c_{\ell+1}=1$ and $c_{\ell+2}=\ell+1$.
\end{remark}

Equation~(\ref{alb}) shows that reconstructing a word uniquely from binomial coefficients of 
right-bounded-block words equates to solve a system of Diophantine equations. 
The knowledge of $k_{\tb}, \ldots, k_{\ta^{\ell} \tb}$ provides $\ell + 1$ equations. If the equation of $k_{\ta^{\ell} \tb}$ has a unique solution for $\{ s_{\ell + 1}, \ldots, s_{k_{\ta} + 1} \}$ (in this case we say, by language abuse, that $k_{\ta^{\ell} \tb}$ is {\em unique}), then the system in row echelon form has a unique solution and thus the binary word is uniquely reconstructible.
Notice that $k_{\ta^{k_\ta}\tb}$ is always unique since 
$k_{\ta^{k_\ta}\tb}=s_{k_{\ta}+1}$. 

Consider $n=10$ and $k_\ta=4$. This leads to $w=\tb^{s_1}\ta \tb^{s_2}\ta \tb^{s_3}\ta \tb^{s_4}\ta\tb^{s_5}$ with $\sum_{i\in[5]}s_i=6$. Given $k_{\ta\tb}=4$ we get $4=s_2+2s_3+3s_4+4s_5$. The $s_i$ are not uniquely determined. If $k_{\ta^2\tb}=2$ is also given, we obtain the equation $2=s_3+3s_4+6s_5$ and thus $s_3=2$ and $s_4=s_5=0$ is the only solution. Substituting these results in the previous equation leads to $s_2=0$ and since we only have six $\tb$, we get $s_1=4$.
Hence $w=\tb^4\ta^2\tb^2\ta^2$ is uniquely reconstructed by $S=\{\ta,\ta\tb,\ta^2\tb\}$.

\medskip

The following definition captures all solutions for the equation defined by $k_{\ta^{\ell}\tb}$ for $\ell\in[k_\ta]_0$.

\begin{definition}
Set 
$M(k_{\ta^{\ell}\tb})=\{(r_{\ell+1},\dots,r_{k_\ta +1})|\,k_{\ta^{\ell}\tb}=\sum_{i=\ell+1}^{k_{\ta}+1}\binom{i-1}{\ell}r_i\}$ for fixed $\ell\in[k_\ta]_0$.
We call $k_{\ta^{\ell}\tb}$ {\em unique} if $|M(k_{\ta^{\ell}\tb})|=1$.
\end{definition}

By Remark~\ref{ci} the coefficients of each equation of the form (\ref{alb}) are strictly 
increasing. The next lemma provides the range each $k_{\ta^{\ell}\tb}$ may take under the 
constraint $\sum_{i=1}^{k_\ta+1}s_i=n-k_{\ta}$.

\begin{lemma}\label{maxvalues}
Let $n\in\N$, $k\in[n]_0$, $j\in[k+1]$ and 
$c_1,\dots,c_{k+1},s_1,\dots,s_{k+1}\in\N_0$ with $c_i<c_{i+1}$, for $i\in 
[k]$, and $\sum_{i=1}^{k+1}s_i=n-k$.  The sum $\sum_{i=j}^{k+1}c_is_i$ is 
maximal iff $s_{k+1}=n-k$ (and 
consequently $s_i=0$ for all $ i\in[k]$).
\end{lemma}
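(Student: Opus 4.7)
The plan is to establish the lemma by a straightforward exchange argument that exploits the strict monotonicity of the coefficients $c_i$. The constraint $\sum_{i=1}^{k+1} s_i = n-k$ distributes a fixed total of $n-k$ units of mass among the $s_i$, while the objective $\sum_{i=j}^{k+1} c_i s_i$ counts only contributions from indices $i \geq j$, weighted by $c_i$. Because $c_{k+1}$ strictly dominates every other $c_i$ by hypothesis, moving any unit of mass to position $k+1$ should strictly increase the objective unless the mass was already there.

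To make this precise, I would suppose that $(s_1,\ldots,s_{k+1})$ is any feasible tuple different from the specified one, so that there exists some index $i \in [k]$ with $s_i > 0$. Define a new feasible tuple $(s_1',\ldots,s_{k+1}')$ by setting $s_i' = s_i - 1$, $s_{k+1}' = s_{k+1}+1$, and $s_t' = s_t$ otherwise. The change in the objective equals $c_{k+1}-c_i$ when $i \geq j$ and $c_{k+1}$ when $i < j$. Both quantities are strictly positive: the first because $c_i < c_{k+1}$ by the strict monotonicity assumption, and the second because $c_1 < c_2 < \cdots < c_{k+1}$ are non-negative integers, so $c_{k+1} \geq k \geq 1$ whenever $k \geq 1$. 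Iterating this single-unit shift until all the mass is concentrated at position $k+1$ gives a strict increase at every step, contradicting maximality of the original tuple. Hence any maximizer must satisfy $s_i = 0$ for all $i \in [k]$, which together with the constraint forces $s_{k+1} = n-k$; conversely, this unique distribution clearly achieves the sum $c_{k+1}(n-k)$.

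The only boundary cases are $k = 0$, where there is a single variable $s_1 = n$ and the claim is immediate, and $n = k$, where $n-k = 0$ forces $s_i = 0$ for all $i$ and the statement holds trivially with objective value $0$. There is no substantive obstacle here; the result is an elementary instance of the rearrangement/greedy principle. The only mild care required is verifying that $c_{k+1} > 0$ in the subcase $i < j$ so that the exchange strictly improves the objective, which is handled by the observation that strictly increasing elements of $\N_0$ satisfy $c_{k+1} \geq k$.
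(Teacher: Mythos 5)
Your proof is correct and follows essentially the same idea as the paper's: both hinge on the strict domination $c_i<c_{k+1}$ (with $c_{k+1}\geq k\geq 1$ guaranteeing strictness even for indices $i<j$ that contribute nothing to the objective), and both dispose of the degenerate cases $k=0$ and $n=k$ separately. The only cosmetic difference is that you iterate a one-unit exchange, while the paper compares an arbitrary competing tuple to $c_{k+1}(n-k)$ in a single step via the bound $\sum_{i=j}^{k}c_is_i'\leq c_k\sum_{i=j}^{k}s_i'<c_{k+1}\ell$.
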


\begin{proof}
The case $k=0$ is trivial. Consider the case $n=k$, i.e., $\sum_{i=1}^{k+1}s_i=0$. This implies 
immediately $s_i=0$ for all $i\in[k+1]$ and the equivalence holds. Assume for 
the rest of the proof $k<n$. 
If $s_{k+1} = n-k$, then $s_i = 0$ for all $i \leq k$ and $\sum_{i=j}^{k+1} c_i s_i = c_{k+1} (n-k)$. 
Let us assume that the maximal value for $\sum_{i=j}^{k+1} c_i s_i$ can be obtained in another way and that there exist $s_1', \ldots, s_{k+1}' \in \mathbb{N}_0$, $\ell \in [n-k]$ such that $\sum_{i=1}^{k+1} s_i' = n-k$ and $s_{k+1}' = n-k-\ell$.  
Thus
\[
c_{k+1}(n-k)\leq
\sum_{i=j}^{k+1}c_i s_i' =\left(\sum_{i=j}^{k}c_i s_i' \right)+c_{k+1}(n-k-\ell).
\]
This implies $\sum_{i=j}^{k}c_i s_i' \geq c_{k+1}\ell$.
Since the coefficients are strictly increasing we get 
$\sum_{i=j}^{k}c_i s_i' \leq c_k\sum_{i=j}^{k}
s_i' <c_{k+1}\ell$, hence the contradiction. \qed
%
%
%
%
%Suppose that $\sum_{i=j}^{k+1}c_is_i$ is maximal and 
%$s_{k+1}=n-k-\ell$ for $0<\ell\leq n-k$. This implies immediately $\sum_{i=j}^{k}s_i 
%\leq \ell$. On the other hand if $s_{k+1}$ is $n-k$ and all other $s_i$ are zero, we get 
%$\sum_{i=j}^{k+1}c_is_i=c_{k+1}(n-k)$. By the supposition that the sum is 
%maximal with $s_{k+1}=n-k-\ell$ we have
%\[
%c_{k+1}(n-k)\leq 
%\sum_{i=j}^{k+1}c_is_i=\left(\sum_{i=j}^{k}c_is_i\right)+c_{k+1}(n-k-\ell).
%\]
%This implies $\sum_{i=j}^{k}c_is_i\geq c_{k+1}\ell$.
%Since the coefficients are strictly increasing we get 
%$\sum_{i=j}^{k}c_is_i\leq c_k\sum_{i=j}^{k}
%s_i\leq c_{k}\ell<c_{k+1}\ell$.
%This implies $c_{k+1}\ell<c_{k+1}\ell$ - a contradiction.\qed
\end{proof}

\begin{corollary}
Let $k_{\ta}\in[n]_0$, $\ell\in[k_\ta]_0$, and 
$s_1,\dots,s_{k_{\ta}+1}\in\N_0$ with $\sum_{i=1}^{k_\ta+1}s_i=n-k_\ta$.
Then 
$\binom{w}{\ta^{\ell}\tb}\in\left[\binom{k_\ta}{\ell}(n-k_\ta)\right]_0$.
\end{corollary}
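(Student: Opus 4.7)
The plan is to combine Equation~(\ref{alb}) with Lemma~\ref{maxvalues}. From (\ref{alb}), for any $w$ of the form~(\ref{basisword}) and any $\ell\in[k_\ta]_0$,
\[
\binom{w}{\ta^\ell\tb}=\sum_{i=\ell+1}^{k_\ta+1}\binom{i-1}{\ell}s_i.
\]
The lower bound $\binom{w}{\ta^\ell\tb}\ge 0$ is immediate, since every summand is a product of non-negative integers, and the sum is empty only in the degenerate case handled trivially.

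For the upper bound $\binom{k_\ta}{\ell}(n-k_\ta)$, I would apply Lemma~\ref{maxvalues} with parameters $k=k_\ta$, $j=\ell+1$, and coefficients $c_i=\binom{i-1}{\ell}$. By Remark~\ref{ci}, these coefficients are strictly increasing on the relevant range $i\in\{\ell+1,\ldots,k_\ta+1\}$, with $c_{k_\ta+1}=\binom{k_\ta}{\ell}$. Since $\sum_{i=1}^{k_\ta+1}s_i=n-k_\ta$, the lemma concludes that the sum $\sum_{i=\ell+1}^{k_\ta+1}c_is_i$ attains its maximum when $s_{k_\ta+1}=n-k_\ta$ and $s_i=0$ for $i\le k_\ta$, giving the value $\binom{k_\ta}{\ell}(n-k_\ta)$.

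The one subtle point I expect is that Lemma~\ref{maxvalues} is phrased for a full sequence $c_1<c_2<\cdots<c_{k+1}$, whereas here the coefficients $\binom{i-1}{\ell}$ vanish for $i\le\ell$ and only become strictly increasing from index $\ell+1$ onward; the strict monotonicity hypothesis therefore fails on the head of the sequence. One clean way around this is simply to re-index, considering the truncated system in the variables $s_{\ell+1},\ldots,s_{k_\ta+1}$ with total mass at most $n-k_\ta$ (adding dummy mass to $s_{k_\ta+1}$ if needed only increases the sum), and then invoke the lemma on the resulting system of size $k_\ta-\ell+1$ with strictly increasing coefficients. Alternatively, a direct one-line estimate avoids the lemma altogether:
\[
\sum_{i=\ell+1}^{k_\ta+1}\binom{i-1}{\ell}s_i\;\le\;\binom{k_\ta}{\ell}\sum_{i=1}^{k_\ta+1}s_i\;=\;\binom{k_\ta}{\ell}(n-k_\ta),
\]
using that $\binom{i-1}{\ell}\le\binom{k_\ta}{\ell}$ for every $i\le k_\ta+1$ and that the $s_i$ are non-negative. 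Either route yields $\binom{w}{\ta^\ell\tb}\in\bigl[\binom{k_\ta}{\ell}(n-k_\ta)\bigr]_0$ and establishes the corollary.
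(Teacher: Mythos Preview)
Your proposal is correct and follows essentially the same approach as the paper, which simply says the corollary ``follows directly from Equation~\eqref{alb} and Lemma~\ref{maxvalues}.'' You are in fact more careful than the paper: you notice that Lemma~\ref{maxvalues} is stated with $c_i<c_{i+1}$ for all $i\in[k]$, whereas the coefficients $\binom{i-1}{\ell}$ are zero on $i\le\ell$; your re-indexing fix and your direct one-line bound both handle this cleanly, and the latter is arguably the simplest complete argument.
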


\begin{proof}
It follows directly from Equation~\eqref{alb} and Lemma~\ref{maxvalues}.\qed
\end{proof}

The following lemma shows some cases in which $k_{\ta^{\ell}\tb}$ is unique.

\begin{lemma}\label{uniquekalb}
Let $k_{\ta}\in[n]$, $\ell\in[k_\ta]_0$ and 
$s_1,\dots,s_{k_\ta+1}\in\N_0$ with $\sum_{i=1}^{k_\ta+1}s_i=n-k_\ta$.
 If 
$k_{\ta^{\ell}\tb}\in[\ell]_0\cup\{\binom{k_\ta}{\ell}(n-k_\ta)\}$ or 
$k_{\ta^{\ell}\tb}=\binom{k_\ta-1}{\ell}r+\binom{k_\ta}{\ell}(n-k_{\ta}-r)$ for $r\in[k_\tb]_0$ 
then 
$k_{\ta^{\ell}\tb}$ is unique.
\end{lemma}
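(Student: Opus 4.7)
The proof naturally splits according to the three value ranges listed for $k_{\ta^{\ell}\tb}$, and in each case I would argue uniqueness by exploiting the strictly increasing nature of the coefficients $c_i=\binom{i-1}{\ell}$, $i\in[\ell+1,k_\ta+1]$, of equation~\eqref{alb} (Remark~\ref{ci}), combined with the mass constraint $\sum_{i=1}^{k_\ta+1}s_i=n-k_\ta$.

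When $k_{\ta^{\ell}\tb}\in[\ell]_0$, the argument is immediate: by Remark~\ref{ci} the smallest coefficient is $c_{\ell+1}=1$ and the next one, $c_{\ell+2}=\ell+1$, already strictly exceeds $\ell$. Hence any candidate tuple $(r_{\ell+1},\dots,r_{k_\ta+1})\in\N_0^{k_\ta-\ell+1}$ with some $r_i\geq 1$ for $i\geq\ell+2$ forces the weighted sum to be at least $\ell+1>k_{\ta^{\ell}\tb}$, a contradiction. Thus all such $r_i$ vanish and the equation reduces to $r_{\ell+1}=k_{\ta^{\ell}\tb}$, uniquely. For the maximum value, $k_{\ta^{\ell}\tb}=\binom{k_\ta}{\ell}(n-k_\ta)$, uniqueness is a direct translation of Lemma~\ref{maxvalues} applied with $j=\ell+1$ and $k=k_\ta$: that lemma states that the bound is attained iff $s_{k_\ta+1}=n-k_\ta$ and all other $s_i$ vanish.

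The third case, $k_{\ta^{\ell}\tb}=\binom{k_\ta-1}{\ell}r+\binom{k_\ta}{\ell}(n-k_\ta-r)$ with $r\in[k_\tb]_0$, is the main obstacle. The intended solution is $s_{k_\ta}=r$, $s_{k_\ta+1}=n-k_\ta-r$ and zero elsewhere. My approach is to compute the deficit from the maximum value: using the Pascal identity $\binom{k_\ta}{\ell}-\binom{k_\ta-1}{\ell}=\binom{k_\ta-1}{\ell-1}$ and setting $T=n-k_\ta-\sum_{i=\ell+1}^{k_\ta+1}s_i=\sum_{i=1}^{\ell}s_i\geq 0$, one obtains the reformulation
\[
r\,\binom{k_\ta-1}{\ell-1}=\binom{k_\ta}{\ell}\,T+\sum_{i=\ell+1}^{k_\ta}\left(\binom{k_\ta}{\ell}-\binom{i-1}{\ell}\right)s_i.
\]
Every term on the right is nonnegative, the coefficient $\binom{k_\ta}{\ell}-\binom{i-1}{\ell}$ strictly decreases in $i$, equals exactly $\binom{k_\ta-1}{\ell-1}$ at $i=k_\ta$ and is strictly larger for $i<k_\ta$, while $\binom{k_\ta}{\ell}>\binom{k_\ta-1}{\ell-1}$ whenever $\ell<k_\ta$.

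The hard part is to convert these strict inequalities, combined with the admissible range of $r$, into the conclusion that $T=0$ and $s_i=0$ for all $\ell+1\leq i\leq k_\ta-1$; once this is done, the identity collapses to $\binom{k_\ta-1}{\ell-1}s_{k_\ta}=r\binom{k_\ta-1}{\ell-1}$, yielding $s_{k_\ta}=r$ and $s_{k_\ta+1}=n-k_\ta-r$ from the mass constraint. I anticipate this final step to be the most technical, essentially a Frobenius-style argument on the numerical semigroup generated by the $c_i$, where the decisive quantitative inputs are the gap $\binom{k_\ta}{\ell}-\binom{i-1}{\ell}\geq\binom{k_\ta-1}{\ell-1}+\binom{k_\ta-2}{\ell-1}$ for $i\leq k_\ta-1$ and the bound $r\leq k_\tb=n-k_\ta$.
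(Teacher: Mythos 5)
Your treatment of the first two cases is correct and essentially identical to the paper's: for $k_{\ta^{\ell}\tb}\in[\ell]_0$ the jump from $c_{\ell+1}=1$ to $c_{\ell+2}=\ell+1$ kills all $s_i$ with $i\ge\ell+2$, and the maximal value is exactly Lemma~\ref{maxvalues}. Your reformulation of the third case is also correct: subtracting the prescribed value from the maximum $\binom{k_\ta}{\ell}(n-k_\ta)$ and applying Pascal's rule does yield, with $T=\sum_{i=1}^{\ell}s_i$,
\[
r\binom{k_\ta-1}{\ell-1}=\binom{k_\ta}{\ell}\,T+\sum_{i=\ell+1}^{k_\ta}\Bigl(\binom{k_\ta}{\ell}-\binom{i-1}{\ell}\Bigr)s_i .
\]
But this is where your proof stops. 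The assertion that the listed inequalities ``convert into'' $T=0$ and $s_i=0$ for $\ell+1\le i\le k_\ta-1$ is the entire content of the case, and announcing it as an anticipated Frobenius-style argument is not a proof; this is a genuine gap, not a verification safely left to the reader.

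Moreover, the gap cannot be closed, because the conclusion you are aiming for is false once $r\ge 2$, and your own identity makes this transparent. Take $n=8$, $k_\ta=3$, $\ell=1$, $r=2$, so $k_{\ta\tb}=\binom{2}{1}\cdot 2+\binom{3}{1}\cdot 3=13$. Your identity becomes $2=3T+2s_2+s_3$ (the coefficient of $s_{k_\ta}=s_3$ being $\binom{k_\ta-1}{\ell-1}=1$), which admits the two nonnegative solutions $(T,s_2,s_3)=(0,0,2)$ and $(0,1,0)$. Correspondingly, the words $\ta\ta\tb\tb\ta\tb^3$ and $\ta\tb\ta\ta\tb^4$ both satisfy $|w|_\ta=3$, $|w|_\tb=5$ and $\binom{w}{\ta\tb}=13$, so $k_{\ta\tb}=13$ is not unique. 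The third clause does hold for $r\in\{0,1\}$, since then $r\binom{k_\ta-1}{\ell-1}$ does not exceed the smallest coefficient appearing on the right-hand side, which is attained only at $i=k_\ta$; but for general $r$ no argument can succeed. For comparison, the paper's own proof of this case takes a different route (it posits a second solution with $s_{k_\ta+1}=n-k_\ta-r'$ and $r'>r$ and derives a contradiction), but it is broken at its first step: it claims non-uniqueness forces $s_{k_\ta+1}<n-k_\ta-r$, whereas in the example above the second solution has $s_{k_\ta+1}=4>n-k_\ta-r=3$. So you should not expect to recover the missing step from the paper either; the statement itself needs to be restricted (e.g.\ to $r\le 1$) before a proof along your lines, or any lines, can go through.
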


\begin{proof}
Consider firstly $k_{\ta^{\ell}\tb}\in[\ell]_0$. By Remark~\ref{ci} we have 
$c_{\ell+1}=1$ and $c_{\ell+2}=\ell+1$. By $c_i<c_{i+1}$ we obtain 
immediately $s_i=0$ for $i\in[k_\ta+1]\backslash[\ell+1]$. By setting 
$s_{\ell+1}=k_{\ta^{\ell}\tb}$ the claim is proven. If 
$k_{\ta^{\ell}\tb}=\binom{k_\ta}{\ell}(n-k_\ta)$, $s_{k_\ta+1}=(n-k_\ta)$ and 
$s_i=0$ for $i\in[k_\ta]_0$ is the only possibility. Let secondly be $r\in[k_\tb]_0$ and 
$k_{\ta^{\ell}\tb}=\binom{k_\ta-1}{\ell}r+\binom{k_\ta}{\ell}(n-k_{\ta}-r)$ and suppose that 
$k_{\ta^{\ell}\tb}$ is not unique. This implies $s_{k_{\ta}+1}<n-k_\ta-r$. Assume that 
$s_{k_\ta+1}=n-k_\ta-r'$ for $r'\in[k_\tb]_{>r}$. Thus there exists $x\in\N$ with 
$\binom{k_\ta}{\ell}(n-k_\ta-r')+x=\frac{(k_\ta-1)!(k_\ta(n-k_\ta)-\ell r)}{\ell!(k_\ta-\ell)!}$, i.e.,
$x=\frac{(k_\ta-1)!(k_\ta r'-\ell r)}{\ell!(k_\ta -\ell)!}$. 
By $k_\tb=n-k_\ta$ we have $x\leq \binom{k_\ta-1}{\ell}r'=\frac{(k_\ta-1)!(k_\ta r'-\ell r')}{\ell!(k_\ta-\ell)!}$ (we only have $r'$ occurrences of $\tb$ left to distribute). By $r'>r$ we have $\frac{(k_\ta-1)!(k_\ta r'-\ell r)}{\ell!(k_\ta-\ell)!}=x<\frac{(k_\ta-1)!(k_\ta r'-\ell r)}{\ell!(k_\ta-\ell)!}$ - a contradiction.\qed
\end{proof}

Since we are not able to fully characterise the uniquely determined values for each 
$k_{\ta^{\ell}\tb}$ for arbitrary $n$ and $\ell$, the following proposition gives the 
characterisation for $\ell\in\{0,1\}$. Notice that we use $k_{\ta}$ immediately since it is 
determinable by $n$ and $k_{\ta^0\tb}=k_{\tb}$.

\begin{proposition}\label{smallell}
The word $w \in \Sigma^{n}$ is uniquely determined by $k_\ta$ and $k_{\ta\tb}$ iff one of the 
following occurs 
\begin{itemize}
\item
$k_\ta = 0$ or $k_\ta = n$ (and obviously $k_{\ta\tb} = 0$),
\item
$k_\ta = 1$ or $k_\ta = n-1$ and $k_{\ta\tb}$ is arbitrary,
\item
$k_\ta \in [n-2]_{\geq 2}$ and $k_{\ta\tb} \in \{0,1,k_\ta(n-k_\ta) - 1, k_\ta(n-k_\ta) \}$.
\end{itemize}
\end{proposition}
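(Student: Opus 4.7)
The proof splits into sufficiency and necessity. For sufficiency, the cases $k_\ta \in \{0, n\}$ and $k_\ta \in \{1, n-1\}$ are immediate: in the first $w$ is constant, and in the second $w$ contains a unique distinguished letter whose position is read off from $k_{\ta\tb}$ (for instance, when $k_\ta = 1$ one has $w = \tb^{n-1-k_{\ta\tb}} \ta \tb^{k_{\ta\tb}}$). For $k_\ta \in \{2, \ldots, n-2\}$ together with $k_{\ta\tb} \in \{0, 1, k_\ta k_\tb - 1, k_\ta k_\tb\}$, Lemma~\ref{uniquekalb} applied at $\ell = 1$ yields uniqueness: the clause $[\ell]_0 = \{0,1\}$ covers the two smallest values, while the formula $\binom{k_\ta - 1}{1} r + \binom{k_\ta}{1}(k_\tb - r) = k_\ta k_\tb - r$ at $r \in \{0,1\}$ covers the two largest.

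For necessity the cases $k_\ta \in \{0, 1, n-1, n\}$ are vacuous, so the substantive task is, for $k_\ta \in \{2, \ldots, n-2\}$ (so that $k_\ta, k_\tb \geq 2$) and each $m = k_{\ta\tb} \in \{2, \ldots, k_\ta k_\tb - 2\}$, to exhibit two distinct tuples $(s_1, \ldots, s_{k_\ta + 1}) \in \N_0^{k_\ta + 1}$ satisfying $\sum_i s_i = k_\tb$ and $\sum_i (i-1) s_i = m$; via~(\ref{basisword}) these correspond to two distinct words with the same $k_\ta$ and $k_{\ta\tb}$. Word reversal induces a bijection that preserves $k_\ta$ and sends $k_{\ta\tb}$ to $k_\ta k_\tb - k_{\ta\tb}$, so without loss of generality we may assume $m \leq k_\ta k_\tb/2$.

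The construction of two solutions proceeds in two regimes. If $m \leq k_\tb$, the tuples $(s_1, s_2) = (k_\tb - m, m)$ and $(s_1, s_2, s_3) = (k_\tb - m + 1, m - 2, 1)$, with remaining entries zero in each, both solve the system and differ at index $3$; admissibility uses $m \geq 2$ and $k_\ta \geq 2$. If $m > k_\tb$, then necessarily $k_\ta \geq 3$, and writing $m = q k_\tb + r$ with $q \geq 1$ and $0 \leq r < k_\tb$ gives $q \leq k_\ta/2$; for $r = 0$ we take the concentrated tuple $s_{q+1} = k_\tb$ and the spread tuple $(s_q, s_{q+1}, s_{q+2}) = (1, k_\tb - 2, 1)$, and for $r \geq 1$ we take $(s_{q+1}, s_{q+2}) = (k_\tb - r, r)$ versus $(s_q, s_{q+1}, s_{q+2}, s_{q+3}) = (1, k_\tb - r - 1, r - 1, 1)$, in both cases with unlisted entries zero. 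The main obstacle is certifying this construction uniformly: one checks algebraically that each candidate satisfies the two sum conditions, that all entries are nonnegative, that the required indices lie in $[1, k_\ta + 1]$ (which reduces to $q \leq k_\ta - 2$ in the $r \geq 1$ subcase and $q \leq k_\ta - 1$ in the $r = 0$ subcase, both consequences of $q \leq k_\ta/2$ together with $k_\ta \geq 3$), and that the two tuples in each comparison are genuinely distinct.
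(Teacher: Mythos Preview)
Your proof is correct, and the necessity half takes a genuinely different route from the paper's. For sufficiency the paper argues directly rather than invoking Lemma~\ref{uniquekalb}, but the content is identical; you might add one sentence noting that uniqueness of $k_{\ta\tb}$ in the sense of the lemma determines $(s_2,\ldots,s_{k_\ta+1})$ and then $s_1$ follows from the sum constraint (this is Theorem~\ref{uniquerecon} for $j=1$).

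For necessity the two arguments diverge. The paper constructs, for every $m\in\{2,\ldots,k_\ta k_\tb-2\}$, a ``greedy'' tuple with $s_{k_\ta+1}=\lfloor m/k_\ta\rfloor$ and a second tuple obtained by lowering $s_{k_\ta+1}$ by one and redistributing; the case split is $m<k_\ta$ versus $m\geq k_\ta$, and some bookkeeping is needed when $m\bmod k_\ta=k_\ta-1$. You instead exploit the word-reversal involution, which fixes $k_\ta$ and sends $k_{\ta\tb}\mapsto k_\ta k_\tb-k_{\ta\tb}$, to reduce to $m\leq k_\ta k_\tb/2$; your case split is then $m\leq k_\tb$ versus $m>k_\tb$, and in the second regime you write $m=qk_\tb+r$ and perturb a tuple concentrated near index $q+1$ by shifting one unit left and one unit right. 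The symmetry reduction is a clean idea that the paper does not use, and it makes the index-range checks (namely $q\leq k_\ta-1$ when $r=0$ and $q\leq k_\ta-2$ when $r\geq 1$) fall out uniformly from $q\leq k_\ta/2$ and $k_\ta\geq 3$. The paper's construction, on the other hand, avoids the appeal to symmetry and would be more convenient if one later wanted an explicit pair of witnesses for a prescribed $m$ without first reflecting.
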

\begin{proof}
Let us first prove that $w$ is uniquely determined in these cases. It is obvious if $k_\ta = 0$ or 
$k_\ta = n$ since the word is composed of the same letter repeated $n$ times.
If $k_\ta = 1$, then $w = \tb^{s_1} \ta \tb^{n-1-s_1}$ and $\binom{w}{\ta\tb} = n-1-s_1 = 
k_{\ta\tb}$. Therefore $w$ is uniquely determined.
If $k_\ta = n-1$, then $ w = \tb^{s_1} \ta \tb^{s_2} \cdots \ta \tb^{s_n}$ with exactly one of the 
$s_i$ being non zero and, in fact, equal to one. We have $\binom{w}{\ta\tb} = \sum_{i=2}^{n} (i-1) 
s_i$ and, if $k_{\ta\tb}$ is given (between $0$ and $n-1$), then $s_{k_{\ta\tb}+1} = 1$ is the only 
non zero exponent. Consider now $k_\ta\in[n-2]_{\geq 2}$, i.e. $w=\tb^{s_1}\ta\tb^{s_2}\dots \tb^{s_{k_\ta}}\ta\tb^{s_{k_\ta+1}}$.
 Thus $k_{\ta\tb} = 0$ implies $s_1 = n - k_\ta $ and $s_2 = 0, \ldots, s_{k_\ta + 1} = 0$ while 
$k_{\ta\tb} = 1$ implies $s_2 = 1 $, $s_1 = n - k_\ta - 1$ and $s_3 =0, \ldots, s_{k_\ta + 1} = 0$. By Lemma~\ref{maxvalues}, we know that \eqref{alb} is maximal if and only if $s_{k_\ta +1 } = n - 
k_{\ta}$ and all the other $s_i$ are equal to zero. In that case, the value of the sum equals 
$k_{\ta}(n - k_{\ta})$. Therefore, if $\binom{w}{\ta\tb} = k_{\ta}(n - k_{\ta})$, the word $w$ is 
uniquely determined. Finally, if $k_{\ta\tb} = k_{\ta}(n - k_{\ta}) -1$, we must have $s_{k_\ta + 1} \leq n - k_\ta - 1$. 
If we choose $s_{k_\ta + 1} = n - k_\ta - 1$, it remains that $\sum_{i=1}^{k_\ta} s_i = 1$ and 
$\sum_{i=2}^{k_\ta} (i-1)s_i = k_\ta -1$. We must have $s_{k_\ta} = 1$ and the other ones equal to 
zero. 
In fact, choosing $s_{k_\ta + 1} = n - k_\ta - 1$ is the only possibility: if otherwise $s_{k_\ta + 
1} = n - k_\ta - \ell$ with $\ell > 1$, we obtain that $\sum_{i=2}^{k_\ta} (i-1) s_i \geq \ell k_\ta 
- 1$ with $\sum_{i=1}^{k_\ta} s_i = \ell$. It is easy to check with Lemma~\ref{maxvalues} that these 
conditions are incompatible.

\medskip

We now need to prove that $w$ cannot be uniquely determined if $k_\ta \in [n-2]_{\geq 2}$ and 
$k_{\ta\tb} \in [k_\ta (n-k_{\ta}) - 2]_{\geq 2}$. To this aim we will give two different sets of 
values for the $s_i$. The first decomposition is the greedy one. Let us put $s_{k_\ta + 1} = \lfloor 
\frac{k_{\ta\tb}}{k_\ta} \rfloor$, $s_{(k_{\ta\tb} \bmod k_\ta) + 1} = 1$ and the other $s_i$ 
equal to $0$. Let us finally modify the value of $s_1$ (which is, at this stage, equal to $0$ or $1$) by adding the value needed. By $\sum_{i=1}^{k_{\ta}+1} s_i=n-k_\ta$ we get $s_1 \leftarrow s_1 + (n-k_\ta) -\lfloor 
\frac{k_{\ta\tb}}{k_\ta} \rfloor - 1$. 
This implies $\sum_{i=1}^{k_\ta + 1} s_i = 1 + (n-k_\ta) - \left \lfloor \frac{k_{\ta\tb}}{k_\ta} \right \rfloor - 
1 + \left \lfloor \frac{k_{\ta\tb}}{k_\ta} \right \rfloor = n-k_\ta$
and $s_i \geq 0$ for all $i$. Moreover we have $
\sum_{i=2}^{k_\ta + 1} (i-1) s_i = (k_{\ta\tb} \bmod k_\ta) + k_\ta \left \lfloor 
\frac{k_{\ta\tb}}{k_\ta} \right \rfloor = k_{\ta\tb}$.

\medskip

Now we provide a second decomposition for the $s_i$. First, let us assume that $2 \leq k_{\ta\tb} < k_{\ta}$. In that case, the greedy algorithm sets 
$s_{k_{\ta\tb} + 1} = 1$, $s_1 = n - k_\ta - 1$ and the other $s_i$ to $0$. Let us 
now set $s_1 = n - k_{\ta} - 2$ and all the other $s_i$ to $0$. Then, update 
$s_{k_{\ta\tb}} \leftarrow s_{k_{\ta\tb}} + 1$ and $s_2 \leftarrow s_2 + 1$ (in the case where $k_{\ta\tb} = 2$, $s_2$ will be equal to $2$ after these manipulations).
We have that the sum in \eqref{alb} is equal to $1 + (k_{\ta\tb} - 1)$ as needed.
Finally, if $k_{\ta\tb} \geq k_\ta$, then $s_{k_\ta + 1}$ was non zero in the greedy decomposition, 
and the idea is to reduce it of a value $1$. 
Let us set $s_{k_\ta + 1} =\lfloor \frac{k_{\ta\tb}}{k_\ta} \rfloor - 1$ and the other $s_i$ to $0$. 
Then, let us update some values: $s_{(k_{\ta\tb} \bmod k_\ta) + 2} \leftarrow s_{(k_{\ta\tb} \bmod k_\ta) + 2} + 1$ and $s_{k_\ta} \leftarrow s_{k_\ta} + 1$ if 
$(k_{\ta\tb} \bmod k_\ta) \neq k_\ta - 1$, and $s_{k_\ta} = 2$, $s_2 = 1$ otherwise. Finally, set 
$s_1$ to the right value, i.e., $n-k_\ta - \sum_{i=2}^{k_\ta +1} s_i$.
It can be easily checked that, in both cases, $s_1 \geq 0$ (notice that $(k_{\ta\tb} \bmod k_\ta) = 
k_\ta - 1$ implies that $\lfloor \frac{k_{\ta\tb}}{k_\ta} \rfloor \leq n - k_\ta -2$) and that all 
$s_i$ sum up to $n - k_\ta$. 
Similarly, we can check that $\sum_{i=2}^{k_\ta + 1} (i-1) s_i$ is equal to $k_{\ta\tb}$ in both 
cases.

\medskip

To sum up, we gave two different decompositions for the $s_i$ in cases where $k_\ta \in [n-2]_{\geq 
2}$ and $k_{\ta\tb} \in [k_\ta (n-k_{\ta}) - 2]_{\geq 2}$. That implies that $w$ cannot be uniquely 
determined in those cases.\qed
\end{proof}

In all cases not covered by Proposition~\ref{smallell} the word cannot be uniquely 
determined by $\binom{w}{\ta}$ and $\binom{w}{\ta\tb}$. The following theorem combines the 
reconstruction of a word with the binomial coefficients of right-bounded-block words.

\begin{theorem}\label{uniquerecon}
Let $j\in[k_\ta]_0$.
If $k_{\ta^{j}\tb}$ is unique, then the word $w\in\Sigma^n$ is uniquely determined by $\{\tb,\ta\tb,\ta^2\tb,\dots,\ta^{j}\tb\}$.
\end{theorem}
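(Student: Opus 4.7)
The plan is to exploit the lower-triangular structure of the system formed by Equation~(\ref{alb}) for $\ell = 0, 1, \ldots, j$ and solve it by back-substitution. First I would note that the information set already contains $\tb$, so $k_\tb$ is known; combined with the given length $n$ this determines $k_\ta = n - k_\tb$, so the hidden word has the parametric form~(\ref{basisword}) with unknown nonnegative exponents $s_1, \ldots, s_{k_\ta+1}$ summing to $n - k_\ta$.

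Next, I would invoke the uniqueness hypothesis directly: because $k_{\ta^j\tb}$ is unique, Equation~(\ref{alb}) for $\ell = j$ admits only one solution $(s_{j+1}, \ldots, s_{k_\ta+1})$, so these exponents are completely determined from the outset.

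Then, for $\ell = j-1, j-2, \ldots, 1$ in decreasing order, I would use Equation~(\ref{alb}) for $\ell$. By Remark~\ref{ci}, the leading coefficient is $\binom{\ell}{\ell} = 1$, so the equation isolates $s_{\ell+1}$ as
\[
s_{\ell+1} \;=\; k_{\ta^\ell\tb} \;-\; \sum_{i = \ell+2}^{k_\ta + 1} \binom{i-1}{\ell}\, s_i,
\]
whose right-hand side is already known from the previous step of the back-substitution. Finally, $s_1$ is recovered from the length constraint $\sum_{i=1}^{k_\ta + 1} s_i = n - k_\ta$, which is just Equation~(\ref{alb}) at $\ell = 0$.

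There is no genuine obstacle beyond the hypothesis itself: the whole system is lower-triangular with unit diagonal thanks to $\binom{\ell}{\ell} = 1$, and the only equation that is not automatically solvable for its next unknown is the topmost one, $\ell = j$, whose uniqueness is precisely what the statement assumes. The theorem is therefore a bookkeeping consequence of the characterisation lemmas already developed, and the proof reduces to writing out the back-substitution cleanly.
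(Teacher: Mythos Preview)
Your proposal is correct and follows exactly the same approach as the paper: uniqueness of $k_{\ta^{j}\tb}$ fixes $s_{j+1},\dots,s_{k_\ta+1}$, and then back-substitution through the lower-triangular system~(\ref{alb}) with unit diagonal recovers the remaining $s_\ell$. Your write-up is in fact more explicit than the paper's one-sentence proof, but the content is identical.
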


\begin{proof}
If $k_{\ta^{j}\tb}$ is unique, the coefficients 
$s_{j+1},\dots,s_{k_{\ta}+1}$ are uniquely determined. Substituting backwards the known values in the first $j-1$ equations (\ref{alb}) (for $\ell = 1, \ldots, j-1$)  we can now obtain successively the values for $s_{j},\dots,s_1$.\qed
\end{proof}

\begin{corollary}
Let $\ell$ be minimal such that $k_{\ta^{\ell}\tb}$ is unique. Then $w$ is uniquely determined by $\{\ta,\ta\tb,\ta^2\tb,\dots,\ta^{\ell}\tb\}$ and not uniquely determined by any \linebreak $\{\ta,\ta\tb,\ta^2\tb,\dots,\ta^{i}\tb\}$ for $i<\ell$.
\end{corollary}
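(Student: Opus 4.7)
The first assertion is an immediate application of Theorem~\ref{uniquerecon}. Since $\ta \in \{\ta,\ta\tb,\ldots,\ta^\ell\tb\}$ the value $k_\ta$ is known, and because $n$ is fixed we also know $k_\tb = n - k_\ta$; hence the given set carries exactly the same information as $\{\tb,\ta\tb,\ldots,\ta^\ell\tb\}$ and the theorem applies.

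For the second assertion, I would fix $i < \ell$ and use the minimality of $\ell$, which gives that $k_{\ta^i\tb}$ is not unique, so $M(k_{\ta^i\tb})$ contains some tuple $(r_{i+1},\ldots,r_{k_\ta+1})$ different from the true tail $(s_{i+1},\ldots,s_{k_\ta+1})$ of $w$. The plan is to construct a second word $w' \neq w$ of length $n$ satisfying $\binom{w'}{\ta^j\tb} = \binom{w}{\ta^j\tb}$ for every $j \in [i]_0$ by taking this alternative tail as the end of the block decomposition of $w'$ and then determining the preceding entries $r_i,r_{i-1},\ldots,r_1$ via successive backward substitution into Equation~\eqref{alb} with its parameter set to $i-1,i-2,\ldots,0$. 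Since each of those equations has leading coefficient $1$ in front of the unique new unknown, the substitution is forced and yields integer values, exactly as in the proof of Theorem~\ref{uniquerecon}.

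The crux is then to guarantee that the forced values $r_1,\ldots,r_i$ are non-negative, for only then does the tuple $(r_1,\ldots,r_{k_\ta+1})$ actually describe a word. This is the main obstacle, since an arbitrary alternative tail need not admit a non-negative backward extension. I would address it by analysing the difference vector $d_m := r_m - s_m$, which lies in the integer kernel of the $(i+1)\times(k_\ta+1)$ Pascal-type matrix built from the first $i+1$ instances of Equation~\eqref{alb}; this kernel has rank $k_\ta - i \geq 1$. The non-uniqueness hypothesis already provides a nonzero kernel element whose last $k_\ta-i$ coordinates keep $s_m + d_m \geq 0$, and the remaining task is to exploit the $(k_\ta-i)$-dimensional freedom inside the kernel so as to also enforce $s_m + d_m \geq 0$ for $m \leq i$. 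This reduces the combinatorial claim to a lattice-geometry question about the intersection of the kernel with the non-negativity box $\{d : s + d \in \N_0^{k_\ta+1}\}$, and is where the bulk of the work in a complete proof would lie.
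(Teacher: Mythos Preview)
Your treatment of the first assertion is correct and is essentially all the paper does: its entire proof reads ``It follows directly from Theorem~\ref{uniquerecon}.'' The paper offers no separate argument for the second assertion.

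You have gone considerably further than the paper in analysing that second assertion, and the obstacle you isolate---non-negativity of the backward-substituted entries $r_i,r_{i-1},\ldots,r_1$---is not merely a gap to be filled but is in fact fatal: the second assertion is \emph{false} under the paper's definition of ``unique''. Take $w=\tb^2\ta^3\tb$, so $n=6$, $k_\ta=k_\tb=3$ and $(s_1,s_2,s_3,s_4)=(2,0,0,1)$. Then $k_{\ta^2\tb}=s_3+3s_4=3$ and $M(k_{\ta^2\tb})=\{(0,1),(3,0)\}$, so $k_{\ta^2\tb}$ is \emph{not} unique and hence $\ell=3$. However, the system
\[
r_1+r_2+r_3+r_4=3,\qquad r_2+2r_3+3r_4=3,\qquad r_3+3r_4=3
\]
has $(2,0,0,1)$ as its \emph{only} solution in $\N_0^{4}$: the alternative tail $(r_3,r_4)=(3,0)$ forces $r_2=-3$. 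Thus $w$ \emph{is} uniquely determined by $\{\ta,\ta\tb,\ta^2\tb\}$ even though $2<\ell$, contradicting the corollary. In your language, the $(k_\ta-i)$-dimensional integer kernel is here one-dimensional, spanned by $(-1,3,-3,1)$, and it meets the box $\{d:s+d\in\N_0^{4}\}$ only at the origin; so the ``lattice-geometry'' programme you outline cannot be completed, and Theorem~\ref{uniquerecon} genuinely gives only one implication.
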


\begin{proof}
It follows directly from Theorem~\ref{uniquerecon}.\qed
\end{proof}

By \cite{KrRo} an upper bound on the number of binomial coefficients to uniquely reconstruct the word 
$w\in\Sigma^n$ is given by the amount of the binomial coefficients of the $(\lfloor \frac{16}{7}\sqrt{n}\rfloor +5)$-spectrum. Notice that implicitly the full spectrum is assumed to be known.
As proven in Section~\ref{prel}, Lyndon words up to this length suffice. 
Since there are $\frac{1}{n}\sum_{d|n}\mu(d)\cdot 2^{\frac{n}{d}}$ Lyndon words of length $n$,
the combination of both results presented in \cite{KrRo,Reut} states that, for $n > 6$,
\begin{align}\label{eqform}
\sum_{i=1}^{\lfloor 
\frac{16}{7}\sqrt{n}\rfloor +5} \frac{1}{i}\sum_{d|i}\mu(d)\cdot 2^{\frac{i}{d}}
\end{align}
binomial coefficients are sufficient for a unique reconstruction with the Möbius function $\mu$. Up to now, it was the best known upper bound.

Theorem~\ref{uniquerecon} shows that $\min\{k_\ta,k_\tb\}+1$ binomial coefficients are enough for reconstructing a binary word uniquely. By Proposition~\ref{smallell} we need exactly one binomial coefficient if $n\in[3]$ and at most two if $n=4$. For $n\in\{5,6\}$ we need at most $n-2$ different binomial coefficients.
The following theorem shows that by Theorem~\ref{uniquerecon}  we need strictly less binomial coefficients for $n > 6$.

\begin{theorem}\label{binarybound}
Let $w\in\Sigma^n$. We have that $\min\{k_\ta,k_\tb\}+1$ binomial coefficients suffice to uniquely reconstruct $w$. If $k_{\ta} \leq k_{\tb}$, then the set of sufficient binomial coefficients is $S = \{\tb,\ta\tb,\ta^2\tb,...,\ta^h \tb\}$ where $h=\lfloor\frac{n}{2}\rfloor$. If $k_{\ta} > k_{\tb}$, then the set is $S = \{\ta,\tb\ta,\tb^2\ta,...,\tb^h \ta\}$.
This bound is strictly smaller than (\ref{eqform}).
\end{theorem}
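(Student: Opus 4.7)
The plan is to derive Theorem~\ref{binarybound} essentially as a corollary of Theorem~\ref{uniquerecon}. By the $\ta \leftrightarrow \tb$ relabelling used throughout Section~\ref{binary}, I may assume without loss of generality that $k_\ta \leq k_\tb$; the symmetric case is identical after swapping the two letters. Under this assumption, the observation recorded immediately after equation~(\ref{alb}) gives $k_{\ta^{k_\ta}\tb} = s_{k_\ta + 1}$, so $k_{\ta^{k_\ta}\tb}$ is always unique in the sense of the earlier definition. Applying Theorem~\ref{uniquerecon} with $j = k_\ta$ then yields that $w$ is uniquely determined by $\{\tb, \ta\tb, \ldots, \ta^{k_\ta}\tb\}$, a set of $k_\ta + 1 = \min\{k_\ta, k_\tb\} + 1$ binomial coefficients, which establishes the word-dependent bound.

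To obtain the uniform set $S = \{\tb, \ta\tb, \ldots, \ta^h\tb\}$ with $h = \lfloor n/2 \rfloor$, I would note that $k_\ta + k_\tb = n$ together with $k_\ta \leq k_\tb$ forces $k_\ta \leq \lfloor n/2 \rfloor = h$; hence $\{\tb, \ta\tb, \ldots, \ta^{k_\ta}\tb\} \subseteq S$ and $S$ also determines $w$ (queries $\binom{w}{\ta^\ell \tb}$ with $\ell > k_\ta$ merely return the trivial value $0$). The mirror case $k_\ta > k_\tb$ is handled by exchanging the roles of $\ta$ and $\tb$, yielding the set $\{\ta, \tb\ta, \ldots, \tb^h \ta\}$ as claimed.

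For the comparison with~(\ref{eqform}) the strategy is to exploit the gap between the linear-in-$n$ bound $\lfloor n/2 \rfloor + 1$ and the at-least-exponential-in-$\sqrt{n}$ quantity~(\ref{eqform}). Writing $K = \lfloor 16\sqrt{n}/7 \rfloor + 5$, I would discard every term of~(\ref{eqform}) except $i = K$ and invoke the standard lower bound $\frac{1}{K}(2^K - 2^{\lceil K/2 \rceil + 1})$ on the number of binary Lyndon words of length $K$. Already for $n \geq 7$ one has $K \geq 11$, and the $i = K = 11$ term alone contributes $\frac{2^{11} - 2}{11} = 186$ Lyndon words, comfortably larger than $\lfloor n/2 \rfloor + 1$ across a wide initial range; for all remaining $n$ the inequality $\frac{2^K}{K} \gg \frac{n}{2} + 1$ follows from elementary monotonicity, since $K$ grows like $\sqrt{n}$. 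The main obstacle, such as it is, lies in making this last estimate tidy uniformly in $n > 6$; the reconstruction half of the proof is a one-line appeal to Theorem~\ref{uniquerecon}.
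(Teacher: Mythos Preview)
Your reconstruction argument is exactly the paper's: invoke the fact that $k_{\ta^{k_\ta}\tb}=s_{k_\ta+1}$ is always unique, apply Theorem~\ref{uniquerecon} with $j=k_\ta$, and use $k_\ta\le\lfloor n/2\rfloor$ to embed the word-dependent set into the uniform set $S$; the paper does precisely this in one line.

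Where you diverge is in the comparison with~(\ref{eqform}). The paper lower-bounds every summand by $N_2(i)\ge 2^{i/2}/i$ (via \cite{Ferov}), sums the resulting geometric series, and then pushes through a calculus argument: it defines a continuous auxiliary function $f(x)$, computes $f',f'',f'''$ after a substitution $y=\tfrac{16}{7}\sqrt{x}+5$, and checks sign changes to conclude $f(x)>0$ for $x\ge 5$. Your route is more economical: keep only the top term $i=K$ and use the sharper single-term estimate $N_2(K)\ge (2^K-2^{\lceil K/2\rceil+1})/K$, which already dominates $\lfloor n/2\rfloor+1$ because $K\ge \tfrac{16}{7}\sqrt{n}+4$ forces $n<\tfrac{49}{256}(K-4)^2$. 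This collapses the analytic part to a single inequality in $K$ (exponential versus quadratic), avoiding the derivative bookkeeping entirely. The trade-off is that the paper's argument, though longer, is fully written out, whereas your sketch still owes the reader the explicit verification of, say, $(2^K-2^{K/2+1})/K>\tfrac{49}{512}(K-4)^2+1$ for all $K\ge 11$; this is routine but should be stated.
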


\begin{proof}
Assume w.l.o.g. $k_\ta\leq k_\tb$.
Then $k_{\ta} \leq \frac{n}{2}$ and Theorem~\ref{uniquerecon} shows that words in the set $ \{ \tb, \ta \tb, \ldots, \ta^{\lfloor \frac{n}{2} \rfloor} \tb \}$ can reconstruct $w$ uniquely. If $k_{\ta} > k_{\tb}$, the set $S$ is obtained by replacing the letter $\ta$ by $\tb$ and vice-versa.

Set 
$N_2(i):=\frac{1}{i}\sum_{d|i}\mu(d)2^{\frac{i}{d}}$ for all $i\in[\lfloor\frac{16}{7}\sqrt{n}\rfloor+5]$, i.e.,
$\sum_{i=1}^{\lfloor 
\frac{16}{7}\sqrt{n}\rfloor +5} N_2(i)$, which is Equation~\eqref{eqform}, binomial coefficients suffice. By 
\cite[Lemma~2.4]{Ferov} we have
\begin{align*}
N_2(i) & \geq 
\frac{1}{i}\left(2^i-\frac{2^{\frac{i}{2}}-1}{2-1}\right)
= \frac{1}{i}\left(2^i-2^{\frac{i}{2}}+1\right)
= \frac{1}{i}\left(2^{\frac{i}{2}} (2^{\frac{i}{2}}-1)+1\right)
\geq \frac{2^{\frac{i}{2}}}{i}.
\end{align*}
This results in 
\begin{align*}
\sum_{i=1}^{\lfloor 
\frac{16}{7}\sqrt{n}\rfloor +5} N_2(i) &\geq \sum_{i=1}^{\lfloor 
\frac{16}{7}\sqrt{n}\rfloor +5} \frac{2^{\frac{i}{2}}}{i}
\geq 
\frac{1}{\frac{16}{7}\sqrt{n} +5} \, \frac{\sqrt{2}^{\frac{16}{7} \sqrt{n}+5}-1}{\sqrt{2}-1}.
\end{align*}
We want to show that this quantity is at least equal to $\frac{n+1}{2}$. Let us define 
$$
f(x) = \frac{1}{\frac{16}{7}\sqrt{x} +5} \, \frac{\sqrt{2}^{\frac{16}{7} \sqrt{x}+5}-1}{\sqrt{2}-1} 
- \frac{x+1}{2}
$$
for all $x > 0$, which is the continuous extension on $\mathbb{R}^{+}$ of the quantity we are 
interested in. It is easy to verify by hand that $f(1), f(2), f(3)$ and $f(4)$ are positive. Let us 
formally show that $f(x) > 0$ for all $x \geq 5$. Since this function is differentiable, we get 
with $y = \frac{16}{7} \sqrt{x} + 5$
\[
f'(x) = \frac{1}{y} \, \sqrt{2}^{y} \, 
\frac{\ln(\sqrt{2})}{\sqrt{2}-1} \, \frac{8}{7 \sqrt{x}} - \frac{1}{y^2} 
\, \frac{8}{7 \sqrt{x}} \, \frac{\sqrt{2}^{y}-1}{\sqrt{2}-1} - \frac{1}{2}.
\]
We thus have $\sqrt{x} = \frac{7y-35}{16}$ and $ y 
\geq 7$ for all $x \geq 1$.  By injecting $y$ in the previous expression, and reducing to the common 
denominator, we have to show that
\begin{align*}
 & \, 2y \sqrt{2}^y 128 \ln(\sqrt{2}) - 256 (\sqrt{2}^y - 1) - 7(7y-35) y^2(\sqrt{2}-1)  \\
 = & \,\sqrt{2}^y (128 \ln(2) y - 256) + 256 - 49y^3 (\sqrt{2} - 1) + 245 y^2 (\sqrt{2} - 1) \\
 \geq & \, \sqrt{2}^y 365 + 256 - 49y^3 (\sqrt{2} - 1) + 245 y^2 (\sqrt{2} - 1)
\end{align*}
is strictly positive. Let us call the last quantity $g(y)$. We will show that it is positive for 
all $y \geq 10.05$, which means that $f(x)$ is positive for all $x$ such that $\frac{16}{7} \sqrt{x} 
+ 5 \geq 10.05$, i.e., for all $x \geq 5$. 
We have 
\begin{align*}
g'(y) &= 365 \sqrt{2}^y \ln(\sqrt{2}) - 147(\sqrt{2}-1) y^2 + 490(\sqrt{2}-1)y, \\
g''(y) &= 365 \sqrt{2}^y (\ln(\sqrt{2}))^2 - 294(\sqrt{2}-1) y + 490(\sqrt{2}-1), \\
g'''(y) &= 365 \sqrt{2}^y (\ln(\sqrt{2}))^3 - 294(\sqrt{2}-1), 
\end{align*}
and $g'''(7) > 50$, $g''(8.5) > 2$, $g'(10.05) > 8$ and finally $g(10.05) > 1787$. 
Since $g'''(y)$ is increasing and positive in $7$, $g''(y)$ is increasing for $y \geq 7$. Therefore 
$g'(y)$ is increasing for $y \geq 8.5$ and finally $g(y)$ is increasing for $y \geq 10.05$ and 
positive.\qed
\end{proof}

\begin{remark}
By Lemma~\ref{uniquekalb} we know that $k_{\ta^{\ell}\tb}$ is unique if it is in $[\ell]_0$ or exactly $\binom{k_\ta}{\ell}(n-k_\ta)$. The probability for the latter is $\frac{1}{2^n}$ for $w\in\{\ta,\tb\}^n$. If $k_{\ta^{\ell}\tb}=m\in[\ell]_0$ we get by (\ref{alb}) immediately $s_{\ell+1}=m$ and $s_{i}=0$ for $\ell+2\leq i\leq k_\ta+1$. Hence, the values for $s_j$ for $j\in[\ell]$ are not determined. By $\sum_{i\in[\ell]}s_i=n-k_\ta-m$ there are $d=\sum_{i\in[\ell]_0}\binom{\ell}{\ell-i}\binom{n-k_\ta-m-1}{i-1}$ possibilities to fulfill the constraints, i.e., we have a probability of $\frac{d}{2^n}$ to have such a word.
\end{remark}

\section{Reconstruction for Arbitrary Alphabets}\label{general}
%%%%%%%%%%%%%%%%%%%%%%%%%%%%%%%%%%%%%%%%%%%%%%%%%%%%%%%%%%%%%%%%%%%%%%
In this section we address the problem of reconstructing words over arbitrary alphabets from their scattered factors. We begin with a series of results of algorithmic nature. Let $\Sigma=\{\ta_1,\dots,\ta_q\}$ be an alphabet equipped with the ordering $\ta_i<\ta_j$ for $1\leq i<j\leq q\in\N$.

\begin{definition}
Let $w_1,\ldots, w_k\in\Sigma^{\ast}$ for $k\in\N$, and $K=(k_\ta)_{\ta\in \Sigma}$ a sequence of $|\Sigma|$ natural numbers. 
 A {\em $K-$valid marking} of  $w_1,\ldots, w_k$ is a mapping $\psi: [k]\times \N \rightarrow \N$ such that for all $j\in [k]$, $i,\ell\in [|w_j|]$, and $\ta\in \Sigma$ there holds
\begin{itemize}
\item if $w_j[i]=\ta$ then $\psi(j,i)\leq k_\ta$,
\item if $i<\ell\leq |w_j|$ and $w_j[i]=w_j[\ell]=\ta$ then $\psi(j,i)<\psi(j,\ell)$.
\end{itemize}
A {\em $K$-valid marking} of  $w_1,\ldots, w_k$ is represented as the string $w_1^\psi$, $w_2^\psi, \ldots$, $w_k^\psi$, where $w_j^\psi [i]=(w_j[i])_{\psi(j,i)}$ for fresh letters $(w_j[i])_{\psi(j,i)}$. 
\end{definition}

For instance, let $k=2$, $\Sigma=\{\ta,\tb\}$, and $w_1=\ta\ta\tb$, $w_2=\ta\tb\tb$. Let $k_\ta=3, k_\tb=2$ define the sequence $K$. A $K$-valid marking of $w_1, w_2$ would be $w_1^\psi=(\ta)_1(\ta)_3(\tb)_1, w_2^\psi =(\ta)_2 (\tb)_1(\tb)_2$ defining $\psi$ implicitly by the indices. We used parentheses in the marking of the letters in order to avoid confusions.

We recall that a topological sorting of a directed graph $G=(V,E)$, with $V=\{v_1,\ldots,v_n\}$, is a linear ordering $v_{\sigma(1)} <  v_{\sigma(2)} < \ldots < v_{\sigma(n)}$ of the nodes, defined by the permutation $\sigma:[n]\rightarrow [n]$, such that there exists no edge in $E$ from $v_{\sigma(i)}$ to $v_{\sigma(j)}$ for any $i>j$ (i.e., if $v_{a}$ comes after $v_b$ in the linear ordering, for some $a=\sigma(i)$ and $b=\sigma(j)$, then we have $i>j$ and there should be no edge between $v_a$ and $v_b$). It is a folklore result that any directed graph $G$ has a topological sorting if and only if $G$ is acyclic.

\begin{definition}
Let $w_1,\ldots, w_k\in\Sigma^{\ast}$ for $k\in\N$, $K=(k_\ta)_{\ta\in \Sigma}$ a sequence of $|\Sigma|$ natural numbers, and $\psi $ a {\em $K-$valid marking} of  $w_1,\ldots, w_k$. Let $G_\psi$ be the graph that has $\sum_{\ta\in \Sigma} k_\ta$ nodes, labelled with the letters $(\ta)_1,\ldots,(\ta)_{k_\ta}$, for all $\ta\in \Sigma$,  and the directed edges $((w_j[i])_{\psi(j,i)},(w_j[i+1])_{\psi(j,i+1)})$, for all $j\in[k]$, $i\in[|w_j|]$, and $((\ta)_i,(\ta)_{i+1})$, for all occuring $i$ and $\ta\in\Sigma$. 
We say that there exists {\em a valid topological sorting} of the $\psi$-marked letters of the words $w_1,\ldots,w_k$ if there exists a topological sorting of the nodes of $G_{\psi}$, i.e., $G_{\psi}$ is a directed acyclic graph.
\end{definition}

The graph associated with the $K$-valid marking of $w_1, w_2$ from above would have the five nodes $(\ta)_1,(\ta)_2,(\ta)_3,(\tb)_1,(\tb)_2$ and the six directed edges $((\ta)_1,(\ta)_3)$, $((\ta)_3,(\tb)_1)$, $((\ta)_2,(\tb)_1)$, $((\tb)_1,(\tb)_2)$, $((\ta)_1,(\ta)_2), ((\ta)_2,(\ta)_3)$ (where the direction of the edge is from the left node to the right node of the pair defining it). This graph has the topological sorting $(\ta)_1(\ta)_2(\ta)_3(\tb)_1(\tb)_2$. 

\begin{theorem}\label{recon1}
For $w_1, \ldots, w_k\in \Sigma^*$ and a sequence $K=(k_\ta)_{\ta\in \Sigma}$ of $|\Sigma|$ natural numbers, there exists a word $w$ such that $w_i$ is a scattered factor of $w$ with $|w|_{\ta}=k_\ta$, for all $i\in [k]$ and all $\ta\in\Sigma$, 
if and only if there exist a $K$-valid marking $\psi$ of the words $w_1,\ldots,w_k$ and a valid topological sorting of the $\psi$-marked letters of the words $w_1,\ldots,w_k$. 
\end{theorem}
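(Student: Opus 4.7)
The proof is a biconditional whose two directions I would handle by explicit constructions mirroring each other.

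For the forward direction, assume $w\in\Sigma^*$ satisfies $|w|_\ta=k_\ta$ for every $\ta\in\Sigma$ and contains each $w_j$ as a scattered factor. Label the occurrences of each letter $\ta$ in $w$ by $(\ta)_1,\dots,(\ta)_{k_\ta}$ in left-to-right order. For each $j\in[k]$ fix one embedding of $w_j$ into $w$, and define $\psi(j,i)$ to be the index $s$ such that position $i$ of $w_j$ is mapped to the $s$-th occurrence of $w_j[i]$ in $w$. This $\psi$ is $K$-valid: the bound $\psi(j,i)\leq k_{w_j[i]}$ is immediate, and if $i<\ell$ with $w_j[i]=w_j[\ell]=\ta$ then the embedding being monotone in position forces $\psi(j,i)<\psi(j,\ell)$. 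The left-to-right order of the labelled letters in $w$ is then a topological sort of $G_\psi$: edges of the form $((\ta)_i,(\ta)_{i+1})$ are respected by construction of the labelling, and edges of the form $((w_j[i])_{\psi(j,i)},(w_j[i+1])_{\psi(j,i+1)})$ are respected precisely because the chosen embedding of $w_j$ in $w$ is monotone.

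For the converse, suppose we are given a $K$-valid marking $\psi$ and a topological ordering $v_1<v_2<\cdots<v_N$ of $G_\psi$, where $N=\sum_{\ta\in\Sigma}k_\ta$. Define $w$ by replacing each node $v_t=(\ta)_s$ with the letter $\ta$; the resulting word has length $N$ and satisfies $|w|_\ta=k_\ta$ by the node count of $G_\psi$. The path of edges $((\ta)_1,(\ta)_2),\dots,((\ta)_{k_\ta-1},(\ta)_{k_\ta})$ forces the $s$-th occurrence of $\ta$ in $w$ to be precisely the node $(\ta)_s$, so the marking index coincides with the order-of-occurrence index. Now fix $j\in[k]$. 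The edges $((w_j[i])_{\psi(j,i)},(w_j[i+1])_{\psi(j,i+1)})$ force the nodes $(w_j[1])_{\psi(j,1)},\dots,(w_j[|w_j|])_{\psi(j,|w_j|)}$ to appear in this relative order in the topological sort, yielding a scattered factor embedding of $w_j$ into $w$.

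The only delicate point, and the thing I would pin down carefully at the very beginning, is that the label $(\ta)_s$ attached to a node of $G_\psi$ must correspond both to ``the $s$-th copy of $\ta$ in $w$'' and to ``the copy of $\ta$ that the marking $\psi$ points to''; this is exactly the role of the second family of edges in the definition of $G_\psi$, together with the $K$-validity condition that forbids non-monotone markings of identical letters in a common $w_j$. Once this synchronisation is checked once, both directions of the equivalence read off directly from the definitions, so no further machinery is needed and no nontrivial combinatorial obstacle remains.
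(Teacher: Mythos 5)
Your proof is correct and takes essentially the same route as the paper's: an occurrence-indexed marking together with the left-to-right order of the labelled letters for the forward direction, and reading the word off the topological sort of $G_\psi$ for the converse. The only cosmetic difference is that the paper additionally pads the word obtained from the sorting with factors $\ta^{k_\ta-|w'|_\ta}$, which is vacuous when, as in your argument, the sort ranges over all $\sum_{\ta\in\Sigma}k_\ta$ nodes of $G_\psi$.
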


\begin{proof}
If $w$ is such that $w_i$ is a scattered factor of $w$, for all $i\in [k]$, and $|w|_\ta=k_\ta$, for all $\ta\in \Sigma$, then we can mark the $i^{th}$ occurrence of $\ta$ as $(\ta)_i$, for all $\ta\in \Sigma$ and $i\in [k_\ta]$. This induces a $K$-valid marking $\psi$ of the words $w_i$, and, moreover, the linear ordering of the nodes of $G_\psi$ induced by the order in which the marked letters (i.e., nodes of $G_\psi$) occur in $w$ is a topological sorting of $G_{\psi}$. 

Let us now assume that there exists a $K$-valid marking $\psi$ of the words $w_1,\ldots,w_k$, and there exists a valid topological sorting of the $\psi$-marked letters of the words $w_1,\ldots,w_k$. Let $w'$ be the word obtained by writing the nodes of $G_{\psi}$ in the order given by its topological sorting and removing their markings. It is clear that $w'$ has $w_i$ as a scattered factor, for all $i\in [k]$, and that $|w'|_\ta\leq k_\ta$, for all $\ta\in \Sigma$. Let now $w=w' \prod_{\ta\in \Sigma}\ta^{k_\ta-|w'|_\ta}$, where $\prod_{\ta\in \Sigma}\ta^{k_\ta-|w'|_\ta} $ is the concatenation of the factors $\ta^{k_\ta-|w'|_\ta}$, for $\ta\in \Sigma$ in some fixed order. Now $w$ has $w_i$ as a scattered factor, for all $i\in [k]$, and $|w|_\ta= k_\ta$, for all $\ta\in \Sigma$.
\qed\end{proof}

Next we show that in Theorem~\ref{recon1} uniqueness propagates in the $\Leftarrow$-direction.

 \begin{corollary}\label{reconst_topo}
Let $w_1, \ldots, w_k\in \Sigma^*$ and $K=(k_\ta)_{\ta\in \Sigma}$ a sequence of $|\Sigma|$ natural numbers. If the following hold
\begin{itemize}
\item there exists a unique $K$-valid marking $\psi$ of the words $w_1,\ldots,w_k$,
\item in the unique $K$-valid marking $\psi$ we have that for each $\ta\in \Sigma$ and $\ell \in [k_\ta]$ there exists $i\in [k]$ and $j\in [|w_i|]$ with $\psi(i,j)=\ell$, and
\item there exists a unique valid topological sorting of the $\psi$-marked letters of the words $w_1,\ldots,w_k$
\end{itemize}
then there exists a unique word $w$ such that $w_i$ is a scattered factor of $w$, for all $i\in [k]$ and $|w|_\ta=k_\ta$ for all $\ta\in \Sigma$.
\end{corollary}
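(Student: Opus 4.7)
The plan is to combine the two directions of Theorem~\ref{recon1} with the three uniqueness assumptions. For existence, I would invoke the $(\Leftarrow)$-direction of Theorem~\ref{recon1}: from the given $K$-valid marking $\psi$ and the valid topological sorting of $G_\psi$, build $w'$ as in that proof. Condition~(2) ensures that every labelled node $(\ta)_\ell$, $\ell\in [k_\ta]$, actually appears in $G_\psi$, so $|w'|_\ta = k_\ta$ for every $\ta\in \Sigma$ and no padding is needed; thus $w = w'$ is already a valid candidate. So there is at least one word with the required property.

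For uniqueness, I would take any word $w$ such that each $w_i$ is a scattered factor of $w$ and $|w|_\ta = k_\ta$ for every $\ta$, and fix an embedding of each $w_i$ into $w$. Exactly as in the $(\Rightarrow)$-direction of Theorem~\ref{recon1}, this induces a $K$-valid marking $\psi_w$: label the $j$-th letter of $w_i$ by the index of the occurrence in $w$ to which it is mapped. By assumption~(1), $\psi_w$ must coincide with the unique $K$-valid marking $\psi$. Now consider the natural marking of $w$ itself, where the $\ell$-th occurrence of $\ta$ in $w$ is labelled $(\ta)_\ell$; by condition~(2) combined with $|w|_\ta=k_\ta$, this produces a linear ordering $w^*$ of exactly the nodes of $G_\psi$.

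The next step is to verify that $w^*$ is a valid topological sorting of $G_\psi$. The edges of $G_\psi$ come in two forms. Edges of the form $((\ta)_i,(\ta)_{i+1})$ are respected by construction of the natural marking of $w$. Edges of the form $((w_j[i])_{\psi(j,i)},(w_j[i+1])_{\psi(j,i+1)})$ are respected because the fixed embedding places the two consecutive letters of $w_j$ at positions of $w$ whose natural markings are exactly $\psi(j,i)$ and $\psi(j,i+1)$, and the former position precedes the latter in $w$. Hence $w^*$ is a topological sorting of $G_\psi$, and by assumption~(3) it must be \emph{the} unique one. Consequently $w^*$ is determined, and removing the indices recovers $w$ uniquely.

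The main subtlety will be to argue that any two embeddings of the $w_i$ into a candidate $w$ induce the same marking, so that $\psi_w$ is well-defined; but this is immediate from~(1), since every embedding yields some $K$-valid marking and there is only one. Once that is observed, the remaining work is the bookkeeping check that both families of edges of $G_\psi$ go forward in $w^*$, which follows directly from the definitions.
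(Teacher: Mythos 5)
Your proof is correct and follows essentially the same route as the paper's: build $w$ by reading off the unique valid topological sorting for existence, and derive uniqueness from the uniqueness of the $K$-valid marking and of the topological sorting. In fact you spell out the uniqueness half (any candidate word, via a fixed embedding of the $w_i$, induces the marking $\psi$ and a topological sorting of $G_\psi$, hence the unique one) more explicitly than the paper, which merely asserts it.
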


\begin{proof}
Let $w$ be the word obtained by writing in order the letters of the unique valid topological sorting of the $\psi$-marked letters of the words $w_1,\ldots,w_k$ and removing their markings. It is clear that $w'$ has $w_i$ as a scattered factor, for all $i\in [k]$, and that $|w|_\ta= k_\ta$, for all $\ta\in \Sigma$. The word $w$ is uniquely defined (as there is no other $K$-valid marking nor valid topological sorting of the $\psi$-marked letters), and $|w|_\ta=k_\ta$, for all $\ta\in \Sigma$. 
\qed\end{proof}

In order to state the second result, we need the projection $\pi_S(w)$ of a word $w\in\Sigma^{\ast}$ on $S\subseteq\Sigma$: $\pi_S(w)$ is obtained from $w$ by removing all letters from $\Sigma \setminus S$.

\begin{theorem}\label{recon2}
Set $W=\{w_{\ta,\tb}\mid \ta<\tb\in \Sigma \}$ such that 
\begin{itemize}
\item $w_{\ta,\tb}\in \{\ta,\tb\}^*$ for all $\ta,\tb\in \Sigma$,
\item for all $w,w'\in W$ and all $\ta \in \Sigma$, if $|w|_{\ta} \cdot |w'|_{\ta} > 0$, then $|w|_{\ta} = |w'|_{\ta}$.
\end{itemize}
Then there exists at most one $w\in \Sigma^*$ such that $w_{\ta,\tb}$ is $\pi_{\{\ta,\tb\}}(w)$ for all $\ta,\tb\in \Sigma$.
\end{theorem}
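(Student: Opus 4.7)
The plan is to prove uniqueness by contradiction via a first-difference argument on the letter positions, exploiting the fact that the relative order of any two distinct letters in $w$ is recorded exactly by the corresponding binary projection. Suppose, toward a contradiction, that two distinct words $w,w'\in \Sigma^*$ both satisfy $\pi_{\{\ta,\tb\}}(w)=w_{\ta,\tb}=\pi_{\{\ta,\tb\}}(w')$ for every pair $\ta<\tb$ in $\Sigma$.

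First I will show that $|w|_\ta=|w'|_\ta$ for every $\ta\in\Sigma$, so that in particular $|w|=|w'|$. For any $\ta\in\Sigma$, picking any $\tb\in\Sigma$ with $\tb\neq\ta$ (the case $|\Sigma|=1$ being uninteresting), the identity $|\pi_{\{\ta,\tb\}}(x)|_\ta=|x|_\ta$ applied to both $x=w$ and $x=w'$ yields $|w|_\ta=|w_{\ta,\tb}|_\ta=|w'|_\ta$. The consistency hypothesis in the statement is exactly what guarantees that this common value is well defined, independently of the auxiliary letter $\tb$ chosen.

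Next, since $w\neq w'$ and $|w|=|w'|$, there exists a smallest index $i$ with $w[i]\neq w'[i]$; set $\ta=w[i]$ and $\tb=w'[i]$, and assume without loss of generality that $\ta<\tb$. Because $w[1..i-1]=w'[1..i-1]$, the projections onto $\{\ta,\tb\}$ share a common prefix $p=\pi_{\{\ta,\tb\}}(w[1..i-1])$. The letter immediately after $p$ in $\pi_{\{\ta,\tb\}}(w)$ is $\ta$, contributed by $w[i]$, while the corresponding letter in $\pi_{\{\ta,\tb\}}(w')$ is $\tb$, contributed by $w'[i]$. Hence $\pi_{\{\ta,\tb\}}(w)\neq\pi_{\{\ta,\tb\}}(w')$, contradicting that both equal $w_{\ta,\tb}$.

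I do not anticipate any significant obstacle: the argument reduces uniqueness to the simple observation that the order between every pair of occurrences of distinct letters in $w$ can be read off from a single binary projection, and iterating over all pairs pins down the whole linear order on the letter occurrences in $w$. The harder direction would be the existential one (reconstructing some $w$ from a consistent family of binary projections), but this is not claimed by the theorem.
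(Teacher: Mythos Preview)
Your argument is correct. The first-difference argument works cleanly: once the letter multiplicities (hence the length) are matched, the pair $\{\ta,\tb\}=\{w[i],w'[i]\}$ at the first discrepancy forces $\pi_{\{\ta,\tb\}}(w)$ and $\pi_{\{\ta,\tb\}}(w')$ to diverge at position $|p|+1$. One cosmetic remark: the consistency hypothesis in the statement is not actually needed for your uniqueness proof, since once a witness $w$ exists, $|w|_\ta$ is intrinsically well defined; the hypothesis only matters for the data $W$ to have any chance of coming from some word at all.

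Your route differs from the paper's. The paper derives Theorem~\ref{recon2} from the graph-theoretic framework set up just before it: the words $w_{\ta,\tb}$ admit a unique $K$-valid marking $\psi$ (with $K=(k_\ta)_\ta$ read off from $W$), and because every pair of marked letters $(\ta)_i,(\tb)_j$ occurs together in $w^\psi_{\ta,\tb}$, the DAG $G_\psi$ can have at most one topological sorting, whence Corollary~\ref{reconst_topo} gives uniqueness of $w$. Your proof is more elementary and self-contained, bypassing markings and topological sortings entirely. The paper's approach, in exchange, slots into the machinery that also yields the existence direction (Theorem~\ref{recon1}) and the linear-time reconstruction algorithm (Remark~\ref{algo}); in that framework the present theorem is essentially free.
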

\begin{proof}
Notice firstly $|W|=\frac{q(q-1)}{2}$.
Let $k_\ta=|w_{\ta,\tb}|_\ta$, for $\ta<\tb\in \Sigma$. These numbers are clearly well defined, by the second item in our hypothesis. Let $K=(k_\ta)_{\ta\in \Sigma}$. It is immediate that there exists a unique $K$-valid marking $\psi$ of the words $(w_{\ta,\tb})_{\ta<\tb\in \Sigma}$. As each two marked letters $(\ta)_i$ and $(\tb)_j$ (i.e., each two nodes $(\ta)_i$ and $(\tb)_j$ of $G_\psi$) appear in the marked word $w^\psi_{\ta,\tb}$, we know the order in which these two nodes should occur in a topological sorting of $G_{\psi}$. This means that, if $G_\psi$ is acyclic, then it has a unique topological sorting. Our statement follows now from Corollary \ref{reconst_topo}.
\qed\end{proof}

\begin{remark}\label{algo}
Given the set $W=\{w_{\ta,\tb}\mid \ta<\tb\in \Sigma \}$ as in the statement of Theorem \ref{recon2}, with $k_\ta=|w_{\ta,\tb}|_\ta$, for $\ta<\tb\in \Sigma$, and $K=(k_\ta)_{\ta\in \Sigma}$, we can produce the unique $K$-valid marking $\psi$ of the words  $(w_{\ta,\tb})_{\ta<\tb\in \Sigma}$ in linear time $O(\sum_{\ta<\tb\in \Sigma} |w_{\ta,\tb}|)=O((q-1)\sum_{\ta \in \Sigma} k_\ta)$: just replace the \nth{$i$} letter $\ta$ of $w_{\ta,\tb}$ by $(\ta)_i$, for all $\ta$ and $i$. The graph $G_{\psi}$ has $O((q-1)\sum k_{\ta})$ edges and $O(\sum k_\ta)$ vertices and can be constructed in linear time $O((q-1)\sum k_\ta)$. Sorting $G_\psi$ topologically takes $O((q-1)\sum k_\ta)$ time (see, e.g., the handbook \cite{cormen}). As such, we conclude that reconstructing a word $w\in \Sigma^*$ from its projections over all two-letter-subsets of $\Sigma$ can be done in linear time w.r.t. the total length of the respective projections. 
\end{remark}

Theorem \ref{recon2} is in a sense optimal: in order to reconstruct a word over $\Sigma$ uniquely, we need all its projections on two-letter-subsets of $\Sigma$. That is, it is always the case that for a strict subset $U$ of $\{\{\ta,\tb\}\mid \ta<\tb\in \Sigma\}$, with $|U|=\frac{q(q-1)}{2} -1$, there exist two words $w'\neq w$ such that $\{\pi_p(w')\mid p\in U\}=\{\pi_p(w)\mid p\in U\}$. We can, in fact, show the following results:
\begin{theorem}\label{char_proj}
Let $S_1,\ldots, S_k$ be subsets of $\Sigma$. The following hold:
\begin{enumerate}
\item If each pair $\{\ta,\tb\}\subseteq \Sigma$ is included in at least one of the sets $S_i$, then we can reconstruct any word uniquely from its projections $\pi_{S_1}(\cdot), \ldots, \pi_{S_k}(\cdot)$. 
\item If there exists a pair $\{\ta,\tb\}$ that is not contained in any of the sets $S_1,\ldots, S_k$, then there exist two words $w$ and $w'$ such that $w\neq w'$ and $\pi_{S_1}(w)=\pi_{S_1}(w'), \ldots,$ $\pi_{S_k}(w)=\pi_{S_k}(w')$. 
\end{enumerate}
\end{theorem}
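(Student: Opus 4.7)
The plan is to prove the two parts separately: part (1) by reducing to Theorem~\ref{recon2} through the fact that projections compose, and part (2) by exhibiting an explicit minimal witness of length two.

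For part (1), my starting observation is that for any $T \subseteq S \subseteq \Sigma$ and any $w \in \Sigma^*$ one has $\pi_T(\pi_S(w)) = \pi_T(w)$ (applying $\pi_T$ to $\pi_S(w)$ keeps exactly the letters of $w$ that lie in $T$, since $T \subseteq S$). Given the hypothesis, for every pair $\{\ta,\tb\} \subseteq \Sigma$ there is some index $i$ with $\{\ta,\tb\} \subseteq S_i$, so I would define $w_{\ta,\tb} := \pi_{\{\ta,\tb\}}(\pi_{S_i}(w))$; this equals $\pi_{\{\ta,\tb\}}(w)$ and is therefore independent of the choice of $i$. Hence the full family $\{w_{\ta,\tb} \mid \ta<\tb \in \Sigma\}$ is determined by the data $\pi_{S_1}(w),\ldots,\pi_{S_k}(w)$. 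The compatibility condition required by Theorem~\ref{recon2} is automatic, since for every letter $\ta$ all the counts $|w_{\ta,\tb}|_\ta$ are equal to $|w|_\ta$. Theorem~\ref{recon2} then guarantees that at most one word produces this family of two-letter projections, and $w$ itself is one such word, so $w$ is uniquely reconstructible from $\pi_{S_1}(w),\ldots,\pi_{S_k}(w)$.

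For part (2), I would exhibit the minimal counterexample. Let $\{\ta,\tb\}$ be a pair not contained in any $S_i$, and take $w = \ta\tb$ and $w' = \tb\ta$. For each index $i$, at least one of $\ta,\tb$ is absent from $S_i$, so $\pi_{S_i}$ erases at least one of these two letters from both $\ta\tb$ and $\tb\ta$. The resulting projections agree in every case (they are either the singleton letter that remains, or the empty word when both letters are absent), so $\pi_{S_i}(w) = \pi_{S_i}(w')$ for every $i$, while $w \neq w'$.

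I do not expect a genuine obstacle here: once Theorem~\ref{recon2} is at hand, part (1) is a direct reduction via transitivity of projections, and part (2) is witnessed by the length-two example. The only point that deserves explicit verification is that the compatibility hypothesis of Theorem~\ref{recon2} holds for the derived family $\{w_{\ta,\tb}\}$, and this is immediate since all such projections descend from the same word $w$.
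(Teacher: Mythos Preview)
Your argument is correct. For part~(1) you go through Theorem~\ref{recon2}, while the paper invokes Corollary~\ref{reconst_topo} directly; since Theorem~\ref{recon2} is itself derived from that corollary, the two routes are essentially the same, and your explicit use of the identity $\pi_T\circ\pi_S=\pi_T$ for $T\subseteq S$ makes the reduction cleaner.

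For part~(2) you take a genuinely different witness. The paper chooses $w=\ta_1\ta_3\ta_4\cdots\ta_q$ and $w'=\ta_2\ta_3\ta_4\cdots\ta_q$, whereas you use $w=\ta\tb$ and $w'=\tb\ta$. Your choice is not only shorter but also avoids a flaw in the paper's example as written: if some $S_i$ contains $\ta_1$ but not $\ta_2$ (which the hypothesis does not exclude), then $\pi_{S_i}(w)$ retains $\ta_1$ while $\pi_{S_i}(w')$ drops $\ta_2$, so the two projections differ already in length. Your pair $\ta\tb,\ \tb\ta$ has the same Parikh vector, so after deleting at least one of the two letters the projections automatically coincide. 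The paper's idea can be repaired (e.g.\ by swapping $\ta_1$ and $\ta_2$ rather than replacing one by the other), but your minimal length-two witness is the tidiest fix.
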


\begin{proof}
The first part is, once again, a consequence of Corollary \ref{reconst_topo}. The second part can be shown by assuming that $\Sigma=\{\ta_1,\ldots,\ta_q\}$ and the pair $\{\ta_1,\ta_2\}$ is not contained in any of the sets $S_1,\ldots,S_k$. Then, for $w=\ta_1\ta_3 \ta_4\ldots \ta_q$ and $w'=\ta_2\ta_3 \ta_4\ldots \ta_q$, we have that $\pi_{S_1}(w)=\pi_{S_1}(w'), \ldots,$ $\pi_{S_k}(w)=\pi_{S_k}(w')$. 
\qed
\end{proof}

In this context, we can ask
how efficiently can we decide if a word is uniquely reconstructible from the projections $\pi_{S_1}(\cdot), \ldots$, $\pi_{S_k}(\cdot)$ for $S_1,\dots,S_k\subset\Sigma$.

\begin{theorem}\label{complexity}
Given the sets $S_1,\ldots, S_k\subset \Sigma$, we decide whether we can reconstruct any word uniquely from its projections $\pi_{S_1}(\cdot), \ldots, \pi_{S_k}(\cdot)$ in $O(q^2 k)$ time. Moreover, under the {\em Strong Exponential Time Hypothesis} (see the survey~\cite{bringmann} and the references therein), there is no $O(q^{2-d} k^c)$ algorithm for solving the above decision problem, for any $d,c>0$.
\end{theorem}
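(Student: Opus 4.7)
The plan splits naturally into two independent parts, one for each half of the statement.

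For the $O(q^2 k)$ upper bound, I would translate the decision problem into the pair-covering formulation supplied by Theorem~\ref{char_proj}: unique reconstruction from $\pi_{S_1},\ldots,\pi_{S_k}$ is possible for every word if and only if every unordered pair $\{x,y\}\subseteq\Sigma$ is contained in at least one $S_i$. A straightforward algorithm then maintains a Boolean table of size $\binom{q}{2}$ indexed by unordered pairs, iterates over each $S_i$ and flags every pair of elements of $S_i$ as covered, and finally scans the table to verify that all pairs have been marked. Each set contributes $O(|S_i|^2)=O(q^2)$ in the marking phase, so the total running time is $O(q^2 k)$. Correctness follows immediately from Theorem~\ref{char_proj}; nothing else needs to be checked.

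For the conditional lower bound, my plan is to reduce from the \emph{Orthogonal Vectors} problem (OV): given $n$ vectors $v_1,\ldots,v_n\in\{0,1\}^D$, decide whether some pair is orthogonal. Under SETH, OV admits no $O(n^{2-\varepsilon}\,\mathrm{poly}(D))$ algorithm for any $\varepsilon>0$; this is the canonical quadratic-hardness result referenced in the survey~\cite{bringmann} cited in the statement. Given an OV instance, I would set $\Sigma=[n]$ and $S_\ell=\{j\in[n] : v_j[\ell]=1\}$ for each $\ell\in[D]$, yielding $q=n$ and $k=D$. A pair $\{i,j\}$ is covered by some $S_\ell$ precisely when $v_i$ and $v_j$ share a coordinate at which both are $1$, i.e.\ when $\langle v_i,v_j\rangle\neq 0$; thus the covering decision outputs YES exactly when no orthogonal pair exists in the OV instance. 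The reduction is linear-time, so any hypothetical $O(q^{2-d}k^c)$ algorithm (for some fixed $d,c>0$) would decide OV in $O(n^{2-d}D^c)=O(n^{2-d}\,\mathrm{poly}(D))$ time, contradicting SETH.

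The main obstacle is conceptual rather than computational: one has to recognise that the reconstruction decision problem is, via Theorem~\ref{char_proj}, nothing but a pair-covering question, and that pair-covering is the natural encoding of OV on a single vector set (via taking the characteristic vectors $c_x\in\{0,1\}^k$ with $c_x[\ell]=1$ iff $x\in S_\ell$). Once this double correspondence is in place, both directions are routine and involve no intricate calculation.
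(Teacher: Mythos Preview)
Your proposal is correct and follows essentially the same route as the paper: both invoke Theorem~\ref{char_proj} to rephrase the decision as a pair-covering question, and both exploit the equivalence with Orthogonal Vectors (via the characteristic vectors $c_x[\ell]=1$ iff $x\in S_\ell$) for the $O(q^2k)$ algorithm and the SETH lower bound. The only cosmetic difference is that the paper starts from the standard \emph{two-set} OV formulation and explicitly reduces it to the one-set variant via the $10/01$-padding trick before building the alphabet of size $2n$, whereas you invoke one-set OV hardness directly.
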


\begin{proof}
We begin with a series of preliminaries. Let us recall the {\em Orthogonal Vectors} problem: Given sets $A, B$ consisting of $n$ vectors in $\{0,1\}^k$, decide whether there are vectors $a\in A$ and $ b\in B$ which are orthogonal (i.e., for any $i\in [k]$ we have $a[i]b[i] = 0$). This problem can be solved na\"ively in $O(n^2k)$ time, but under the {\em Strong Exponential Time Hypothesis} there is no $O(n^{2-d} k^c)$ algorithm for solving it, for any $d,c>0$ (once more, see the survey \cite{bringmann} and the references therein).

We show that our problem is equivalent to the {\em Orthogonal Vectors} problem. 

Let us first assume that we are given the sets $S_1,\ldots, S_k\subset \Sigma$, and we want to decide whether we can reconstruct any word uniquely from its projections $\pi_{S_1}(\cdot), \ldots, \pi_{S_k}(\cdot)$. This is equivalent, according to Theorem \ref{char_proj}, to checking whether each pair $\{\ta,\tb\}\subseteq \Sigma$ is included in at least one of the sets $S_i$. For each letter $\ta$ of $\Sigma$ we define the $k$-dimensional vectors $x_\ta$ where $x_\ta[i]=1$ if $\ta \in S_i$ and $x_\ta[i]=0$ if $\ta \not\in S_i$. This can be clearly done in $O(q k)$ time. Now, there exists a pair $\{\ta,\tb\}$ that is not contained in any of the sets $S_1,\ldots, S_k$ if and only if there exists a pair of vectors $\{x_\ta,x_\tb\}$ such that $x_\ta[i] x_\tb[i]=0$ for all $i\in [k]$. We can check whether there exists a pair of vectors $\{x_\ta,x_\tb\}$ such that $x_\ta[i] x_\tb[i]=0$ for all $i\in [k]$ by solving the {\em Orthogonal Vectors} problem by using for both input sets of vectors the set $\{x_\ta\mid \ta \in \Sigma\}$. As such, we can check whether there exists a pair $\{\ta,\tb\}$ that is not contained in any of the sets $S_1,\ldots, S_k$ in $O(q^2 k)$ time.

Let us now assume that we are given two sets $A, B$ consisting of $n$ vectors in $\{0,1\}^k$, and we want to decide whether there are vectors $a\in A$ and $ b\in B$ which are orthogonal (i.e., for any $i\in [k]$ we have $a[i]b[i] = 0$). We can compute the set of $(k+2)$-dimensional vectors $A'$ containing the vectors of $A$ extended with two new positions (position $k+1$ and position $k+2$) set to $10$ and the vectors of $B$ extended with two new positions (position $k+1$ and position $k+2$) set to $01$. To decide whether there are vectors $a\in A$ and $ b\in B$ which are orthogonal is equivalent to decide whether there are vectors $a,b\in A'$ which are orthogonal (if two such vectors exist, they must be different on their last two positions, so one must come from $A$ and one from $B$). Assume that $A'=\{x_1,x_2,\ldots,x_{2n}\}$. Now we define an alphabet $\Sigma=\{\ta_1,\ldots,\ta_{2n}\}$ of size $2n$ and the sets $S_1,\ldots, S_k$, where $\ta_j\in S_i$ if and only if $x_j[i]=1$. Computing $A'$ and then the alphabet $\Sigma$ and the sets $S_i$, for $i\in [k]$, takes $O(nk)$ time. Now, to decide whether there are vectors $x_i,x_j\in A'$ which are orthogonal is equivalent to decide whether there exists a pair of letters $\{\ta_i, \ta_j\}$ of $\Sigma$ that is not contained in any of the sets $S_1,\ldots, S_k$. 
The conclusion of the theorem now follows.
\qed\end{proof}

%%%%%%%%%%%%%%%%%%%%%%%%%%%%%%%

Coming now back to combinatorial results, we use the method developed in Section~\ref{binary} to reconstruct a word over an arbitrary alphabet. We show that we need at most $\sum_{i\in[q]}|w|_i(q+1-i)$ different binomial coefficients 
to reconstruct $w$ uniquely for the alphabet $\Sigma=\{1,\dots,q\}$. In fact, following the results from the first part of this section, we apply this method on all combinations 
of two letters. Consider for an example that for $w\in\{\ta,\tb,\tn\}^{6}$ the following binomial 
coefficients $\binom{w}{\ta^0\tb}=1$, $\binom{w}{\ta^0\tn}=2$, 
$\binom{w}{\ta^1\tb}=0$, $\binom{w}{\ta^1\tn}=3$, $\binom{w}{\tb^1\tn}=2$, and $\binom{w}{\ta^2\tn}=1$
are given.
By $|w|=6$, $|w|_\tb=1$, and $|w|_\tn=2$, we get $|w|_\ta=3$. Applying the method from Section~\ref{binary} for $\{\ta,\tb\}$, $\{\ta,\tn\}$, and $\{\tb,\tn\}$ we obtain the scattered factors $\tb\ta^3$, $\ta\tn\ta\tn\ta$, and $\tb\tn^2$. Combining all these three scattered factors gives us uniquely 
$\tb\ta\tn\ta\tn\ta$. Notice that in this example we only needed six binomial coefficients instead 
of ten, which is the worst case.

\begin{remark}
As seen in the example we have not only the word length but also $\binom{w}{\tx}$ for all 
$\tx\in\Sigma$ but one. Both information give us the remaining single letter binomial coefficient 
and hence we will assume that we know all of them. 
\end{remark}

For convenience in the following theorem consider $\Sigma=\{1,\dots,q\}$ for $q>2$ and set $\alpha:=\lfloor\frac{16}{7}\sqrt{n}\rfloor +5$. In the general case the results by \cite{Reut} and \cite{KrRo} yield that
\begin{align}\label{generalnumber}
\sum_{i\in[\alpha]}\frac{1}{i}\frac{(q+1)^{\frac{i}{2}}-1}{q}
\end{align}
is smaller than the best known upper bound on the number of binomial coefficients sufficient to reconstruct a word uniquely.

The following theorem generalises Theorem~\ref{binarybound} on an arbitrary alphabet.

\begin{theorem}\label{reconstructgeneral}
For uniquely reconstructing a word $w\in\Sigma^{\ast}$ of length at least \linebreak $q-1$, $\sum_{i\in[q]}|w|_i(q+1-i)$ binomial coefficients suffice, which is strictly smaller than (\ref{generalnumber}).
\end{theorem}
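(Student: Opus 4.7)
The plan is to combine Theorem~\ref{recon2}, which reduces reconstruction over $\Sigma$ to that of all two-letter projections of $w$, with the binary reconstruction method from Section~\ref{binary}. Since $|\pi_{\{i,j\}}(w)|_i = |w|_i$ for every pair $(i,j)$ containing the letter $i$, the consistency hypothesis of Theorem~\ref{recon2} is automatically fulfilled, so knowing the projections $\pi_{\{i,j\}}(w)$ for all $1 \le i < j \le q$ is enough to pin down $w$. Because $\binom{w}{u} = \binom{\pi_{\{i,j\}}(w)}{u}$ whenever $u \in \{i,j\}^{*}$, each projection can be recovered from binomial coefficients of $w$ over the corresponding two-letter sub-alphabet, independently of the others.

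For a fixed pair $(i,j)$ with $i < j$, I would apply Theorem~\ref{uniquerecon} to $\pi_{\{i,j\}}(w)$ with the roles $\ta = i,\ \tb = j$. Querying the $|w|_i + 1$ coefficients $\binom{w}{i^\ell j}$ for $\ell = 0, 1, \ldots, |w|_i$, the Diophantine system~(\ref{alb}) becomes upper triangular: the equation at $\ell = |w|_i$ gives $s_{|w|_i + 1} = \binom{w}{i^{|w|_i} j}$ uniquely, and one back-substitutes to recover every $s_\ell$. Hence $|w|_i + 1$ binomial coefficients per pair $(i, j)$ with $i < j$ suffice; this is a deliberate loosening of the tight $\min(|w|_i, |w|_j) + 1$ bound, chosen to match the form of the announced estimate.

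Summing over the pairs while sharing single-letter coefficients then gives the claim. The $\ell = 0$ coefficient for the pair $(i, j)$ is $\binom{w}{j}$, depending only on $j$, hence shared across the $j-1$ pairs $(1, j), \ldots, (j-1, j)$; the distinct single-letter queries collapse to $\binom{w}{j}$ for $j = 2, \ldots, q$, namely $q - 1$ of them. The remaining coefficients $\binom{w}{i^\ell j}$ with $\ell \ge 1$ are pairwise distinct and contribute $\sum_{i=1}^{q-1}(q-i)\,|w|_i$ queries in total (the pair $(i,j)$ contributing $|w|_i$, and there being $q-i$ pairs sharing the first letter $i$). Hence at most $(q-1) + \sum_{i=1}^{q-1}(q-i)\,|w|_i$ coefficients are needed, and the direct computation
\[
    \sum_{i=1}^{q} |w|_i (q+1-i) \ - \ \Bigl((q-1) + \sum_{i=1}^{q-1}(q-i)\,|w|_i\Bigr) \ = \ |w| - (q-1)
\]
shows that this is bounded above by $\sum_{i=1}^{q} |w|_i (q+1-i)$, using precisely the hypothesis $|w| \ge q-1$.

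The main obstacle is the strict comparison with~(\ref{generalnumber}). My bound is at most $q|w|$ (since $q+1-i \le q$), hence polynomial in $n = |w|$, whereas (\ref{generalnumber}) contains a term of order $\frac{(q+1)^{\alpha/2}}{q\,\alpha}$ with $\alpha = \lfloor \frac{16}{7} \sqrt{n} \rfloor + 5$, which is exponential in $\sqrt{n}$. To turn this into a rigorous inequality valid in the full admissible range of $n$, I would mimic the calculus argument from the end of the proof of Theorem~\ref{binarybound}: introduce the continuous extension of the difference $(\ref{generalnumber}) - q n$, verify positivity by hand on a short initial segment, and propagate it by monotonicity of the low-order derivatives. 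The base $\sqrt{2}$ simply gets replaced by $\sqrt{q+1} \ge 2$, which strengthens every estimate in the original argument, so no genuinely new obstacle arises.
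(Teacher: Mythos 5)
Your counting argument is essentially identical to the paper's: reduce to the $\binom{q}{2}$ two-letter projections via Theorem~\ref{recon2}, reconstruct each projection $\pi_{\{i,j\}}(w)$ from the $|w|_i+1$ coefficients $\binom{w}{i^\ell j}$, $\ell=0,\dots,|w|_i$ (Theorem~\ref{uniquerecon} with $j=k_\ta$ always applies, since $k_{\ta^{k_\ta}\tb}=s_{k_\ta+1}$ is unique), share the $q-1$ single-letter queries across pairs, and absorb the slack $|w|-(q-1)$ using the hypothesis $|w|\ge q-1$; your arithmetic here is correct and matches the paper's. Where you genuinely diverge is the comparison with (\ref{generalnumber}). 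The paper argues by induction on $q$ with Theorem~\ref{binarybound} as base, which forces it to control a difference of two sums term by term (its Steps 1--3: $f(i)\ge 0$ for $i\ge 2$, then $f(5)+f(1)\ge 0$, then $f(\alpha)>n$). You instead coarsen your count to $qn$ (an estimate the paper itself records in a remark) and compare with (\ref{generalnumber}) directly; since every summand of (\ref{generalnumber}) is positive, it suffices that the single term $i=\alpha$, namely $\bigl((q+1)^{\alpha/2}-1\bigr)/(q\alpha)$, already exceed $qn$. This is a cleaner route that dispenses with the paper's Steps 1 and 2 entirely. The one place your sketch is too quick is the claim that replacing $\sqrt{2}$ by $\sqrt{q+1}$ ``strengthens every estimate, so no genuinely new obstacle arises'': the left-hand side has also changed from $\tfrac{n+1}{2}$ to $qn$, and $q$ is a second free parameter ranging up to $n+1$, so uniformity in $q$ must be checked, not just in $n$. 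This is easily repaired --- the dominant term grows in $q$ like $q^{\alpha/2-1}/\alpha\ge q^{5/2}/\alpha$ while $qn$ is only linear in $q$, so the worst case is $q=3$, where the required inequality $2^{\alpha}>9n\alpha+1$ follows from $\alpha\ge\tfrac{16}{7}\sqrt{n}+4$ by exactly the one-variable calculus you invoke --- but it does need to be said explicitly for the argument to be complete.
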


\begin{proof}
The claim that $\sum_{i\in[q]}|w|_i(q+1-i)$ binomial coefficients suffice to reconstruct $w$ uniquely follows by Theorem~\ref{recon2}: for each pair of letters we apply the method of the binary case. 
We are thus going to reconstruct words $w_{\ta,\tb}$ for all pairs of letters $\ta < \tb$. If $\ta$ is the $i^\text{th}$ letter in the alphabet, there are $q-i$ such pairs. To determine $w_{\ta,\tb}$ uniquely,  $\min(k_{\ta}, k_{\tb}) + 1 \leq k_{\ta} + 1$ binomial coefficients from the set $\{k_{\tb}, k_{\ta \tb} ,\ldots, k_{\ta^{|w|_\ta} \tb}  \}$ suffice. In total, we thus need the binomial coefficients of the set
$$
 \{ k_{\ta^j \tb} : \ta < \tb, j \in [|w|_{\ta}] \} \cup \{k_\tb : \tb \in \Sigma \backslash \{1\} \}.
$$ 
There are $
\sum_{i \in [q]} |w|_i (q-i) + (q-1)
$
such coefficients.
This quantity is less than or equal to $\sum_{i \in [q]} |w|_i(q+1-i)$ for every $w$ of length at least $q-1$. 

We show the second claim about the bound by induction on $q$ where the binary case in Theorem~\ref{binarybound} serves as induction basis. This implies
\begin{align*}
\sum_{i\in[q]}|w|_i(q+1-i)
&=\left(\sum_{i\in[q-1]}|w|_i(q+1-i)\right)+|w|_q(q+1-q)\\
&=\left(\sum_{i\in[q-1]}|w|_i(q-i)\right)+\sum_{i\in[q-1]}|w|_i+|w|_q
\end{align*}
and therefore
\begin{align*}
\sum_{i\in[q]}|w|_i(q+1-i) &\leq \left(\sum_{i\in[\alpha]}\frac{1}{i}\frac{q^{\frac{i}{2}}-1}{q-1}\right)+n\\
&=\left(\sum_{i\in[\alpha]}\frac{1}{i}\frac{q(q^{\frac{i}{2}}-1)}{q(q-1)}\right)+n.
\end{align*}
On the other hand, we have to compare this quantity with \eqref{generalnumber} which can be rewritten as 
\[
\sum_{i\in[\alpha]}\frac{1}{i}\frac{(q+1)^{\frac{i}{2}}-1}{q}
=\sum_{i\in[\alpha]}\frac{1}{i}\frac{(q-1)((q+1)^{\frac{i}{2}}-1)}{q(q-1)}.
\]
Thus the claim is proven, if the substraction of the latter one and the previous one is greater than zero, i.e., we show that
\begin{align}
\left(\sum_{i\in[\alpha]}\frac{1}{i} \frac{(q-1)((q+1)^{\frac{i}{2}}-1)-q(q^{\frac{i}{2}}-1)}{q(q-1)}\right)-n &>0,\mbox{ i.e.} \\
\left(\sum_{i\in[\alpha]}\frac{1}{i} 
\frac{(q-1)(q+1)^{\frac{i}{2}}-qq^{\frac{i}{2}}+1}{q(q-1)}\right)-n &> 0.\label{eqfinal}
\end{align}
With $f(i)=\frac{(q-1)(q+1)^{\frac{i}{2}}-qq^{\frac{i}{2}}+1}{iq(q-1)}$ for all $i\in[\alpha]$, the proof of (\ref{eqfinal}) contains the following steps
\begin{enumerate}
\item\label{item1} For all $i\geq 2$ we have $f(i)\geq 0$,
\item\label{item2} $f(5)+f(1)\geq 0$,
\item\label{item3} $f(\alpha)-n > 0$.
\end{enumerate}
Step \ref{item1}.: For $i=2$ we have
\begin{align*}
f(2)
=\frac{1}{2}\frac{(q-1)(q+1)-q^2+1}{q(q-1)}
=\frac{q^2-1-q^2+1}{2q(q-1)}
=0.
\end{align*}
For $i=3$ we have 
\begin{align*}
f(3)
&=\frac{1}{3}\frac{(q-1)(q+1)\sqrt{q+1}-q^2\sqrt{q}+1}{q(q-1)}.
\end{align*}
Consider the function $g:\R\rightarrow\R;q\mapsto q^4-2q^3-2q^2+q+1$.
This function has two minima (between $-0.75$ and $-0.5$ as well as between $1.75$ and $2$) and one maximum (between $0.125$ and $0.25$). Since $g$ has only two inflexion points and $g$ is strictly greater than zero at the first minima, $g$ has only two roots. The first root is between $0.7$ and $0.8$ and the second root is between $2.5$ and $2.75$. Thus for all $q\geq 2.75$ we have $g(q)>0$. This implies $q^5+q^4-2q^3-2q^2+q+1>q^5$. Hence equivalently we get $(q+1)(q^4-2q^2+1)>q^5$, i.e., $(q+1)(q^2-1)^2>qq^4$.
This implies $\sqrt{q+1}(q^2-1)>\sqrt{q}q^2$ which proves that the numerator of $f(3)$ is positive and hence $f(3)>0$. Before we prove the claim for $i\geq 4$, we will prove that $(q-1)(q+1)^j\geq q^{j+1}$ for $j\geq 2$. 
Firstly we get
$$
(q-1)(q+1)^j
=\left(\sum_{k\in [j]}\left (\binom{j}{k-1}-\binom{j}{k}\right) q^k\right) +q^{j+1}-1.
$$
Due to the central symmetry of each row of the Pascal triangle and since the distribution of the binomial coefficient is unimodal, for $k\le \lfloor j/2\rfloor$, we have 
$$\binom{j}{j-k}-\binom{j}{j-k+1}=-\left(\binom{j}{k-1}-\binom{j}{k}\right)>0$$
and thus
$$
(q-1)(q+1)^j=\left(\sum_{k\in [\lfloor j/2\rfloor]} \left (\binom{j}{k}-\binom{j}{k-1}\right) (q^{j-k+1}-q^k)\right) +q^{j+1}-1.
$$
Since $k\le \lfloor j/2\rfloor$, we have $j-k+1>k$ and each term of the above sum is thus positive. 
This shows that $(q-1)(q+1)^j\ge q^{j+1}$.
This leads to the following estimations for $f(i)$.
For $i=2j$ and $j\geq 2$ we get
\begin{align*}
f(i)
&= \frac{(q-1)(q+1)^j-qq^j+1}{iq(q-1)}\geq \frac{q^{j+1}-q^{j+1}+1}{iq(q-1)}>0.
\end{align*}
Finally for $i=2j+1$ and $j\geq 2$ we get
\begin{align*}
f(i)
&=\frac{(q-1)(q+1)^j\sqrt{q+1}-qq^j\sqrt{q}+1}{iq(q-1)}
&\geq \frac{q^{j+1}(\sqrt{q+1}-\sqrt{q})+1}{iq(q-1)}>0.
\end{align*}
Step \ref{item2}.: Notice that $\alpha\geq 7$ holds and thus $f(5)$ is always a summand.
For $f(5)+f(1)$ we have to prove
\begin{align*}
\frac{(q-1)(q+1)^2\sqrt{q+1}-qq^2\sqrt{q}+1}{5q(q-1)} + \frac{(q-1)\sqrt{q+1}-q\sqrt{q}+1}{q(q-1)}\geq 0
\end{align*}
Thus we get for the numerator 
\begin{align*}
&(q-1)(q+1)^2\sqrt{q+1}-qq^2\sqrt{q}+1+5(q-1)\sqrt{q+1}-5q\sqrt{q}+5\\
&= (q-1)\sqrt{q+1}((q+1)^2+5)-q\sqrt{q}(q^2+5)+6\\
&=(q-1)\sqrt{q+1}(q^2+2q+6)-q\sqrt{q}(q^2+5)+6\\
&=q^3\sqrt{q+1}+q^2\sqrt{q+1}+4q\sqrt{q+1}-6\sqrt{q+1}-q^3\sqrt{q}-5q\sqrt{q}+6.
\end{align*}
%&\geq q^2\sqrt{q+1}-6q\sqrt{q+1}+4\sqrt{q+1}+5q\sqrt{q}-4\\
%&=\sqrt{q+1}(q^2-6q+4)+5q\sqrt{q}-4\\
%&\geq \sqrt{q+1}(q^2-6q+4).
%\end{align*}
We have $q^3 \sqrt{q+1} > q^3 \sqrt{q}$ and, since $q \geq 3$,
$$
q^2 \sqrt{q+1} \geq (6+q) \sqrt{q+1}.
$$
Therefore $q^2 \sqrt{q+1} + 4q \sqrt{q+1} \geq 6 \sqrt{q+1} + 5q \sqrt{q}$ and the numerator is positive.

\smallskip
Step \ref{item3}.: Notice that for fixed $i$, $f(i)$ is monotonically increasing for increasing $q$. This implies 
\[
f(\alpha)\geq \frac{2\cdot 4^{\frac{\alpha}{2}}-3\cdot 3^{\frac{\alpha}{2}}+1}{6\alpha}
=\frac{2^{\alpha+1}-3^{\frac{\alpha}{2}+1}+1}{6\alpha}.
\]
We are going to prove that 
\begin{align}\label{lastEquation}
2^{\alpha+1}-3^{\frac{\alpha}{2}+1}+1 > 6\alpha n.
\end{align}
 Recall that $\alpha$ is a function of $n$, given by $\alpha = \lfloor \frac{16}{7} \sqrt{n} \rfloor + 5$. 

First, we have
$$
2^{\alpha+1}-3^{\frac{\alpha}{2}+1} > 2^{\alpha-1} - 2^{\frac{\alpha}{2}}.
$$
Indeed, this inequality is equivalent to
\begin{align*}
2^{\frac{\alpha}{2}} \left(3 \cdot 2^{\frac{\alpha}{2}-1} + 1 \right) > 3^{\frac{\alpha}{2}+1} \quad \Leftrightarrow \quad
2^{\frac{\alpha}{2}-1} + \frac{1}{3} > \left(\frac{3}{2}\right)^{\frac{\alpha}{2}}.
\end{align*}
We can check that this last inequality is true by taking the logarithm of both sides, since $\alpha > 5$.

Therefore, it is sufficient for \eqref{lastEquation} to show that $$ 2^{\alpha-1} - 2^{\frac{\alpha}{2}} = 2^{\frac{\alpha}{2}} (2^{\frac{\alpha}{2}-1} -1) > 6 \alpha n.$$

Note that $2^{\frac{\alpha}{2}} > n$ (indeed, $ \lfloor \frac{16}{7} \sqrt{n} \rfloor+5 > 2 \sqrt{n}+5$, thus $2^{\frac{\alpha}{2}} > 2^{\frac{5}{2}} \cdot 2^{ \sqrt{n}}$). Once again, taking the logarithms, one can check that $2^{\frac{5}{2}} \cdot 2^{ \sqrt{n}} > n$ holds.

To verify \eqref{lastEquation}, it remains to show that $2^{\frac{\alpha}{2} - 1}  -1 \geq 6 \alpha$ or that
$
2^{\frac{\alpha}{2} - 1}  > 6 \alpha 
$.
Taking the logarithms, it is equivalent to
\begin{align*}
&\frac{\alpha}{2} - 1 > \log(6) + \log(\alpha) \\
\Leftrightarrow &\alpha - 2 \log(\alpha) > 2 \log(6) + 2,
\end{align*}
which is true for $\alpha \geq 15$, that is for $n \geq 16$. Equation~\eqref{lastEquation} can be verified by a computer for $q-1 \leq n < 16$.

%For $n\geq 5$, we have $\alpha\geq 10$ and thus $\alpha\geq 2\log(\alpha)+8$. This implies $2^{\frac{\alpha}{2}-4}\geq 4$ and hence $2^{\frac{\alpha}{2}-1}>6\alpha$. Combined with $2^{\frac{\alpha}{2}}>n$ we have $2^{\alpha-1}+2^{\frac{\alpha}{2}}>6\alpha n$. Since $\alpha\geq 7$ we have additionally $2^{\alpha+1}-3^{\frac{\alpha}{2}+1}>2^{\alpha-1}-2^{\frac{\alpha}{2}}$. This proves Step \ref{item3}.

\medskip

By \ref{item1}., \ref{item2}., and \ref{item3}. 
Equation (\ref{eqfinal}) is proven and this proves the claim.\qed
\end{proof}

\begin{remark}
Since the estimation in Theorem~\ref{reconstructgeneral} depends on the distribution of the letters in contrast to the method of reconstruction, it is wise to choose an order $<$ on $\Sigma$ such that $x<y$ if $|w|_x\leq |w|_y$. In the example we have chosen the {\em natural} order $\ta<\tb<\tn$ which leads in the worst case to fourteen binomial coefficients that has to be taken into consideration. If we chose the order $\tb<\tn<\ta$ the formula from Theorem~\ref{reconstructgeneral} provides that ten binomial coefficients suffice. This observation leads also to the fact that less binomial coefficients suffice for a unique determinism if the letters are not distributed equally but some letters occur very often and some only a few times.
\end{remark}

\begin{remark}
Let's note that the number of binomial coefficients we need is at most $qn$. Indeed, we will prove that $\sum_{i\in[q]}|w|_i(q+1-i)\leq qn$. We have $qn
=qn+n-n
=q\sum_{i\in[q]}|w|_i+\sum_{i\in[q]}|w|_i-\sum_{i\in[q]}|w|_i
\geq q\sum_{i\in[q]}|w|_i+\sum_{i\in[q]}|w|_i-\sum_{i\in[q]}(|w|_ii)
=\sum_{i\in[q]}|w|_i(q+1-i)$.
\end{remark}

\section{Conclusion}\label{conc}
%%%%%%%%%%%%%%%%%%%%%%%%%%%%%%%%%%%%%%%%%%%%%%%%%%%%%%%%%%%%%%%%%%%%%%
In this paper we have proven that a relaxation of the so far investigated reconstruction problem from scattered factors from $k$-spectra to arbitrary sets yields that less scattered factors than the best known upper bound are sufficient to reconstruct a word uniquely. Not only in the binary but also in the general case the distribution of the letters plays an important role: in the binary case the amount of necessary binomial coefficients is smaller the larger $|w|_\ta-|w|_\tb$ is. The same observation results from the general case - if all letters are equally distributed in $w$ then we need more binomial coefficients than in the case where some letters rarely occur and others occur much more often. Nevertheless the restriction to right-bounded-block words (that are intrinsically Lyndon words) shows that a word can be reconstructed by fewer binomial coefficients if scattered factors from different spectra are taken. Further investigations may lead into two directions: firstly a better characterisation of the uniqueness of the $k_{\ta^{\ell}\tb}$ would be helpful to understand better in which cases less than the worst case amount of binomial coefficients suffices and secondly other sets than the right-bounded-block words could be investigated for the reconstruction problem.

\end{document}